\definecolor{ForestGreen}{rgb}{0.1333,0.5451,0.1333}
\definecolor{DarkRed}{rgb}{0.80,0,0}
\definecolor{Red}{rgb}{1,0,0}
\theoremstyle{plain}
\newtheorem{theorem}{Theorem}[section]
\newtheorem{lemma}[theorem]{Lemma}
\newtheorem{corollary}[theorem]{Corollary}
\theoremstyle{definition}
\newtheorem{definition}[theorem]{Definition}
\newcommand{\R}{\mathbb{R}}
\newcommand{\calM}{\mathcal{M}}
\newcommand{\bfx}{\mathbf{x}}
\newcommand{\bfv}{\mathbf{v}}
\newcommand{\bfp}{\mathbf{p}}
\newcommand{\gic}{\mathrm{GIR}}
\newcommand{\gics}{\mathrm{SGIR}}
\newcommand{\ic}{\mathrm{IR}}
\newcommand{\rr}{\mathrm{RR}}
\newcommand{\ps}{\mathrm{PS}}
\newcommand{\mnw}{\mathrm{MNW}}
\DeclareMathOperator*{\argmax}{argmax}
\begin{document}

\title{Incentive Analysis of Collusion in Fair Division}

\author{
    Haoqiang Huang\thanks{HKUST. Email: \texttt{haoqiang.huang@connect.ust.hk}}
    \and
    Biaoshuai Tao\thanks{Shanghai Jiao Tong University. Email: \texttt{bstao@sjtu.edu.cn}}
    \and
    Mingwei Yang\thanks{Stanford University. Email: \texttt{mwyang@stanford.edu}}
    \and
    Shengwei Zhou\thanks{University of Macau. Email: \texttt{yc17423@um.edu.mo}}
}

\date{}

\maketitle

\begin{abstract}
    We study fair division problems with strategic agents capable of gaining advantages by manipulating their reported preferences.
    Although several impossibility results have revealed the incompatibility of truthfulness with standard fairness criteria, subsequent works have circumvented this limitation through the \emph{incentive ratio} framework.
    Previous studies demonstrate that fundamental mechanisms like \emph{Maximum Nash Welfare (MNW)} and \emph{Probabilistic Serial (PS)} for divisible goods, and \emph{Round-Robin (RR)} for indivisible goods achieve an incentive ratio of $2$, implying that no individual agent can gain more than double his truthful utility through manipulation. 
    However, collusive manipulation by agent groups remains unexplored.
    
    In this work, we define \emph{strong group incentive ratio (SGIR)} and \emph{group incentive ratio (GIR)} to measure the gain of collusive manipulation, where SGIR and GIR are respectively the maximum and minimum of the incentive ratios of corrupted agents.
    Then, we tightly characterize the SGIRs and GIRs of MNW, PS, and RR.
    In particular, the GIR of MNW is $2$ regardless of the coalition size.
    Moreover, for coalition size $c \geq 1$, the SGIRs of MNW and PS, and the GIRs of PS and RR are $c + 1$.
    Finally, the SGIR of RR is unbounded for coalition size $c \geq 2$.
    Our results reveal fundamental differences of these three mechanisms in their vulnerability to collusion.
\end{abstract}

\section{Introduction}

Fair division is a fundamental and highly applicable field of study within economics and computer science.
The core challenge is to assign a set of resources among interested agents in a way that is deemed fair.
A typical model includes $m$ goods and $n$ competing agents, and asks for an allocation of goods among agents.
The fairness of an allocation is evaluated via the preferences of agents over the goods, which are often private information.
Depending on whether the goods are positively valued and whether a good can be partially assigned (i.e., divisible) or not (i.e., indivisible), this simple model conceives a variety of interesting settings that capture the scenarios of property division, assigning limited medical resources during a public health crisis, allocating chores to cleaning staff, and so on.

Since the unfair division of resources can lead to a series of negative consequences like conflicts and inefficiency, fair division has attracted extensive research interest for decades. The most prevalent criterion of fairness is \emph{envy-freeness (EF)}. An allocation is EF if any agent weakly prefers his allocated set of resources to that of any other. However, an EF allocation may not exist for indivisible goods due to the simple example where two agents compete for one good. Hence, several relaxations of EF such as \emph{envy-freeness up to one good (EF1)}~\cite{DBLP:conf/sigecom/LiptonMMS04,DBLP:conf/bqgt/Budish10} and \emph{envy-freeness up to any good (EFX)}~\cite{DBLP:conf/ecai/GourvesMT14,DBLP:journals/teco/CaragiannisKMPS19} have been proposed. We refer to the recent excellent surveys~\cite{DBLP:conf/ijcai/AmanatidisBFV22,DBLP:journals/sigecom/AzizLMW22} for more details.

Over the years, several mechanisms for fair division have been studied, including the prominent \emph{Round-Robin (RR)}~\cite{tao2024fairtruthfulmechanismsadditive}, \emph{Probabilistic Serial (PS)}~\cite{DBLP:journals/jcss/HuangWWZ24,DBLP:conf/wine/JiangNVW23}, and \emph{Maximum Nash Welfare (MNW)} mechanisms~\cite{DBLP:journals/teco/CaragiannisKMPS19,DBLP:journals/ai/BeiTWY25}. 
It is widely known that PS and MNW satisfy EF for divisible goods, while RR and MNW satisfy EF1 for indivisible goods.

However, these fairness guarantees of classic mechanisms are built on the assumption that all participated agents will report their private preferences over the goods truthfully.
It is natural that agents may not simply trust the mechanisms and report their preferences honestly.
Moreover, it has been shown that if the agents are not truthful, the resulting allocation will change drastically and no longer be fair~\cite{DBLP:journals/ijgt/EkiciK16}.
These consequences give rise to the attention on the \emph{truthful} aspect of fair division mechanisms and motivate the game-theoretic perspective of the fair division problem, under which the pursuit of fairness becomes considerably less tractable.
A line of literature has focused on designing mechanisms that are both truthful and (approximately) fair~\cite{DBLP:conf/sigecom/LiptonMMS04,DBLP:conf/aldt/CaragiannisKKK09,DBLP:journals/mp/AzizLW24}.
Unfortunately, it is shown that truthfulness is in general incompatible with natural fairness notions, except when monetary transfer is allowed~\cite{DBLP:conf/sigecom/AmanatidisBCM17,DBLP:journals/ai/BuST23}.

Given the impossibility of satisfying fairness and truthfulness simultaneously, a compromise is to preserve fairness guarantees while relaxing strict truthfulness requirements.
The notion of \emph{incentive ratio}, as a natural relaxation of truthfulness, has been widely studied in the context of Fisher markets~\cite{DBLP:journals/iandc/ChenDTZZ22} and resource sharing~\cite{DBLP:conf/ipps/ChengDL20,DBLP:conf/sigecom/ChengDLY22,DBLP:journals/dam/ChenCDQY19}, and is recently introduced to the fair division problem~\cite{DBLP:journals/jcss/HuangWWZ24,DBLP:journals/ai/BeiTWY25,DBLP:conf/aaai/XiaoL20,tao2024fairtruthfulmechanismsadditive}. Moreover, the definition of incentive ratio is also closely related to the concept of \emph{approximate Nash equilibrium}, an extensively studied concept in game theory.

Generally speaking, the incentive ratio of a mechanism is defined as the worst-case ratio between the utilities achieved by an agent under manipulated preference and his truthful preference.
This metric evaluates the  incentive to deviate through a worst-case analysis, reflecting the most extreme scenarios where agents could benefit from dishonesty. Previous studies show that RR admits an incentive ratio of $2$ for indivisible goods~\cite{DBLP:conf/aaai/XiaoL20,tao2024fairtruthfulmechanismsadditive}, and PS and MNW both achieve an incentive ratio of 2 for divisible goods~\cite{DBLP:journals/iandc/ChenDTZZ22,DBLP:journals/jcss/HuangWWZ24,DBLP:conf/atal/0037SX24,DBLP:journals/ai/BeiTWY25}.
Moreover, for indivisible goods, incentive ratio lower bounds that lie between $(1, 2)$ are established under various valuation function classes and fairness criteria~\cite{tao2024fairtruthfulmechanismsadditive}.

The incentive ratio framework is useful in the sense that the mechanism with a lower incentive ratio is more likely to achieve truthfulness when agents are sufficiently rational.
There are two reasons.
First, the analysis of the incentive ratio follows the worst-case style: it assumes that each agent has both the comprehensive knowledge of other agents’ preferences—often unavailable in practice—and the capability of getting solutions to computationally prohibitive optimization problems for computing the optimal manipulated preference; yet, in realistic scenarios, performing optimal manipulation is costly.
Second, the ``perfect rationality'' assumption that agents are consistently rational and always act in a way that maximizes the utility is also insufficient to capture human decision-making's complex ethical and behavioral dimensions. The incentive ratio framework, playing a role in the middle of both sides, captures the individual behaviors under bounded rationality. Hence, some constant incentive ratio, even like 2, implies that the agents' incentive to manipulate the mechanism is reasonably bounded~\cite{DBLP:journals/jcss/HuangWWZ24}.

However, all the previous studies on incentive ratio neglect the case where a group of agents can jointly misreport their preferences to gain more manipulation capability against the fair division mechanisms, which is not uncommon.
To characterize agents' incentive to form a coalition to conduct strategic behaviors, we consider natural generalizations of incentive ratio and analyze the robustness of classic mechanisms against collusive manipulation.

\subsection{Our Contributions}

When agents can collectively manipulate the mechanisms, we introduce natural generalizations of incentive ratio that we call \emph{strong group incentive ratio (SGIR)} and \emph{group incentive ratio (GIR)} (Definitions~\ref{def:strong-gir} and~\ref{def:gir}), which closely resembles the well-studied notions of \emph{(weak) group strategyproofness} \cite{DBLP:conf/wine/AzizY14} and \emph{(super) strong Nash equilibrium}~\cite{DBLP:journals/ijgt/BraggionGLSS20,DBLP:conf/sigecom/HanSTX23} (see the discussion below \Cref{cor:compare-defs-gic} for more detailed comparisons).
For a fixed coalition size $c \geq 1$, we allow at most $c$ agents to be corrupted and form a coalition, and they can decide collectively how to misreport their valuation functions.
Informally, the SGIR of a mechanism is the smallest $R \geq 1$, which may depend on $c$, such that misreporting cannot increase the utility of at least one corrupted agent by a factor strictly larger than $R$ while guaranteeing that every corrupted agent is weakly better off.
Moreover, the GIR of a mechanism is the smallest $R \geq 1$ such that misreporting cannot increase the utility of every corrupted agent by a factor strictly larger than $R$.
Note that the SGIR of a mechanism always upper bounds its GIR, and both SGIR and GIR are equivalent to incentive ratio when $c = 1$.

\begin{table}[t]
\centering
\begin{tabular}{@{}llll@{}}
\toprule
                             & Maximum Nash Welfare & Probabilistic Serial & Round-Robin \\ \midrule
SGIR & $c + 1$              & $c + 1$              & $+\infty$  \\
GIR        & $2$                   & $c + 1$              & $c + 1$     \\ \bottomrule
\end{tabular}
\caption{Our results for the SGIRs and GIRs of MNW, PS, and RR with coalition size $c \geq 1$. In particular, the unbounded SGIR of RR holds only for $c \geq 2$. Both SGIR and GIR are equivalent to incentive ratio when $c = 1$, and it is previously known that the incentive ratios of all three mechanisms are $2$~\cite{DBLP:conf/aaai/XiaoL20,DBLP:journals/jcss/HuangWWZ24,DBLP:journals/ai/BeiTWY25}.}
\label{fig:results}
\end{table}

As our main contributions, which are summarized in \Cref{fig:results}, we tightly characterize the SGIRs and GIRs of the MNW, PS, and RR mechanisms, where we assume goods are divisible for MNW and PS, and are indivisible for RR.
Notably, our results in the special case of $c = 1$ generalize the incentive ratio $2$ of all three mechanisms.

We first show that the SGIR and GIR of MNW for coalition size $c \geq 1$ are $c + 1$ and $2$, respectively (\Cref{thm:gic-mnw}).
We achieve this by leveraging the connection between MNW and \emph{Fisher markets}: the set of allocations maximizing Nash welfare coincides with the set of allocations corresponding to \emph{market equilibria} when all agents have identical budgets.
Then, we generalize the proof of \cite[Theorem 3.1]{DBLP:journals/iandc/ChenDTZZ22}, which shows that the incentive ratio for market equilibrium is $2$ when the utility functions satisfy the \emph{weak gross substitute} property, to establish the SGIR and GIR of MNW.

To establish our results for PS and RR, we first show that the SGIR and GIR of PS are upper bounded respectively by the SGIR and GIR of RR (\Cref{thm:gics-ps-grr}).\footnote{This also implies that the incentive ratio of PS is upper bounded by that of RR, which provides an alternative proof of the incentive ratio upper bound of $2$ for PS.}
This is achieved by reducing PS to RR, in the sense that the allocation returned by PS on an instance with divisible goods can be computed by RR on a transformed instance with indivisible goods (\Cref{thm:equi-ps-rr}).
We note that the reduction from PS to RR might seem unsurprising in hindsight, given that PS can be viewed as a continuous version of RR, and the reduction works simply by creating sufficiently many indivisible copies of each divisible good.
Nevertheless, to the best of our knowledge, this connection has not been formalized in the past literature.
In fact, the possibility of transforming a PS problem into an RR problem is recently raised as an open problem by~\cite{DBLP:conf/atal/0037SX24}.

To further illustrate the non-trivial nature of the reduction from PS to RR, first observe that the number of indivisible copies for each divisible good, denoted as $T$, has to be a multiple of $n$, as witnessed by the instance where all agents have identical ordinal preferences.
As agents' preferences diverge, the execution of PS becomes much more involved, which introduces additional constraints on the divisibility of $T$.
By carefully coupling the executions of RR and a combinatorial implementation of PS, we show that $T = (n!)^m$ meets all the divisibility constraints.

Then, we show that for coalition size $c \geq 1$, the GIR of RR is upper bounded by $c + 1$ (\Cref{thm:gic-rr-ub}), and the GIR of PS is lower bounded by $c + 1$ (\Cref{thm:gic-ps-lb}).
It follows by the connection between PS and RR that both bounds are tight (\Cref{cor:lb-gic-grr-eve}).
For the SGIR of PS, although it is not directly implied by the GIR of RR, we manage to show that the SGIR of PS is upper bounded by $c + 1$ for coalition size $c \geq 1$ by leveraging the ingredients in the proofs of the above connection and RR's GIR upper bound (\Cref{thm:gics-ps-ub}).
Finally, we show that the SGIR of RR is unbounded for coalition size $c \geq 2$ (\Cref{thm:gic-rr}).

Although MNW, PS, and RR all admit an incentive ratio of $2$, our results reveal that they exhibit varied vulnerability when considering collective manipulation.
In particular, these results strictly separate MNW, PS, and RR in terms of their resistance against strategic behaviors with MNW greatly outperforming the other two, and hence further substantiate the prominence of MNW as evidenced by prior work~\cite{DBLP:conf/wine/0002PP020,DBLP:journals/ai/BeiTWY25}.

\subsection{Related Work}

The existence of fair and truthful mechanisms has been extensively studied in various settings.
When goods are indivisible, \cite{DBLP:conf/sigecom/AmanatidisBCM17} characterize all truthful mechanisms for two agents and show that truthfulness is incompatible with any meaningful fairness notion.
For divisible goods, fairness and truthfulness can be easily achieved by simply allocating each good evenly to all agents.
Moreover, \cite{DBLP:conf/sigecom/ColeGG13} present a truthful mechanism that achieves a $1/e$-approximation guarantee to the Nash welfare maximizing solution.
When each good can be either divisible or indivisible, \cite{Li2023Truthful} show that truthfulness and fairness are incompatible even for instances with one indivisible good and one divisible good, and the compatibility only holds for very restrictive special cases of valuation functions.
\cite{DBLP:conf/sagt/LiTWWYZ25} extend a number of results for truthfully allocating goods to the setting where items are negatively valued, i.e., chores.

Another widely considered setting with divisible resources is referred to as \emph{cake-cutting}, where the cake is modeled by an interval $[0,1]$, and each agent's preference is described by a \emph{value density function}.
In the \emph{Robbertson-Webb} model, where mechanisms communicate with agents via queries to gradually learn their preferences, there exist strong impossibility results regarding the existence of fair and truthful mechanisms~\cite{kurokawa2013cut,branzei2015dictatorship}.
In a more tractable model, all agents report their value density functions to mechanisms simultaneously.
For \emph{piecewise-constant} value density functions, a recent result by \cite{DBLP:journals/ai/BuST23} shows that truthfulness is incompatible with EF, which holds even for two agents.
For the more restrictive \emph{piecewise-uniform} value density functions, truthful and EF mechanisms are known to exist~\cite{CL10,DBLP:conf/wine/AzizY14,bei2020truthful}.

We have seen that fairness and truthfulness are incompatible in most scenarios for the fair division problem.
To bypass the strong negative results for combining truthfulness and fairness, many relaxed notions of (dominant-strategy) truthfulness are proposed, including \emph{ex-ante truthfulness}~\cite{DBLP:journals/ai/BeiTWY25,DBLP:conf/focs/BuT25}, \emph{maximin strategy-proofness}~\cite{brams2006better}, \emph{non-obviously manipulability}~\cite{troyan2020obvious,DBLP:conf/nips/0001V22,ortega2022obvious}, and \emph{risk-averse truthfulness}~\cite{DBLP:journals/ai/BuST23,DBLP:journals/corr/abs-2502-18805}.
Another series of research assumes goods to have \emph{binary} marginal values, in which case truthfulness is achievable together with various fairness and efficiency properties \cite{DBLP:conf/wine/0002PP020,DBLP:conf/aaai/BabaioffEF21,DBLP:conf/aaai/BarmanV22}.
\section{Preliminaries}
\label{sec:prelim}

There are $m$ goods to be allocated to $n$ agents.
We assume that each agent $a \in [n]$ is associated with a non-negative value $v_a(g)$ for every good $g \in [m]$.
These values induce an ordinal preference $\succ_a$, which is a strict ordering over $[m]$ such that $v_a(g) > v_a(g')$ implies $g \succ_i g'$ for all $g, g' \in [m]$.
Let $\bfv = (v_1,\ldots,v_n)$ be a valuation profile.
For a subset of agents $C\subseteq [n]$, denote $\bfv_{C} = (v_a)_{a \in C}$ as the valuation profile induced by agents in $C$, and denote $\bfv_{-C} = (v_a)_{a \in [n] \setminus C}$ as the valuation profile induced by agents in $[n] \setminus C$.
We consider both indivisible and divisible goods in this paper, which we separately introduce below.

\paragraph{Indivisible goods.}
When all goods are indivisible, a \emph{bundle} is a subset of $[m]$.
An \emph{integral allocation} $A = (A_1, \ldots, A_n)$ is a partition of $[m]$ satisfying $A_a \cap A_b = \emptyset$ for all $a \neq b$ and $\bigcup_{a=1}^n A_a = [m]$, where $A_a$ denotes the bundle received by agent $a$.
For an agent $a \in [n]$ and a subset of goods $S \subseteq [m]$, let $v_a(S)$ be the value of $S$ for agent $a$.
We assume that every valuation function $v_a$ is \emph{additive}, i.e., $v_a(S \cup T) = v_a(S) + v_a(T)$ for all $S, T \subseteq [m]$ with $S \cap T = \emptyset$.
The \emph{utility} of agent $a$ under an allocation $A$ is given by $v_a(A_a)$.

\paragraph{Divisible goods.}
When all goods are divisible, a \emph{bundle} is denoted by a vector $\bfx \in \R_{\geq 0}^m$, where $x_g$ denotes the fraction of good $g \in [m]$ contained in the bundle.
A \emph{fractional allocation} $\bfx = (\bfx_1, \ldots, \bfx_n) \in \R_{\geq 0}^{n \times m}$, where $\bfx_a$ denotes the bundle received by agent $a$, satisfies $\sum_{a=1}^n x_{a, g} = 1$ for every $g \in [m]$.
The value of agent $a$ for a bundle $\bfx_a$ is $v_a(\bfx_a) = \sum_{g=1}^m v_a(g) \cdot x_{a, g}$.
The \emph{utility} of agent $a$ under a fractional allocation $\bfx$ is given by $v_a(\bfx_a)$.

\paragraph{Mechanisms.}
A \emph{mechanism} takes agents' preferences as input and outputs an allocation.
Formally, a mechanism $\calM$ takes a valuation profile $\bfv = (v_1, \ldots, v_n)$ as input and outputs an allocation $\calM(\bfv) = (\calM_1(\bfv), \ldots, \calM_n(\bfv))$, where $\calM_a(\bfv)$ denotes the bundle received by agent $a$.
The allocation $\calM(\bfv)$ can be integral or fractional, depending on whether the goods are indivisible or divisible.
Agents' preferences are private information, and they will try to maximize their utilities by misreporting.

We will sometimes consider \emph{ordinal mechanisms}, whose outputs only depend on agents' ordinal preferences over goods.
In this case, the mechanism only asks each agent to submit an ordinal preference over goods instead of a valuation function.
Hence, the mechanism takes an ordinal profile $\succ = (\succ_1, \ldots, \succ_n)$ as input and outputs an allocation $\calM(\succ) = (\calM_1(\succ), \ldots, \calM_n(\succ))$.

\subsection{Group Incentive Ratio}

In this subsection, we define several notions to quantify agents' gains due to their strategic behaviors.
We first recall the definition of incentive ratio, which quantifies the gain of unilateral manipulation, and then generalize it to the setting where agents can collectively manipulate.
We only define them for cardinal mechanisms which take a valuation profile as input, and the definitions for ordinal mechanisms are analogous.

\begin{definition}[Incentive Ratio]\label{def: IR}
    The \emph{incentive ratio} of a mechanism $\calM$, denoted as $\ic_{\calM}$, is defined as the smallest value $R \geq 1$ such that for all agent $a \in [n]$ and manipulated valuation function $v_a'$,
    \begin{align*}
        v_a(\calM_i(v_a', \bfv_{-a}))
        \leq R \cdot v_a(\calM_i(\bfv)).
    \end{align*}
\end{definition}

If the incentive ratio of a mechanism is $1$, then we say that this mechanism is \emph{truthful}.
Next, we consider the case where agents can collaborate to manipulate.
We refer to the agents who form a coalition to manipulate as \emph{corrupted agents}, and the remaining agents who always report truthfully as \emph{honest agents}.
The first generalization of incentive ratio is defined as the maximum incentive ratio among the corrupted agents with restriction to the manipulated profiles where each corrupted agent is weakly better off.

\begin{definition}[Strong Group Incentive Ratio]\label{def:strong-gir}
    The \emph{strong group incentive ratio (SGIR)} of a mechanism $\calM$ against coalitions of size $c \geq 1$, denoted as $\gics_{\calM}(c)$, is defined as the smallest value $R \geq 1$ such that for all coalition $C \subseteq [n]$ and manipulated valuation functions $\bfv_C'$ with $|C| \leq c$, if $v_a(\calM_a(\bfv)) \leq v_a(\calM_a(\bfv_C', \bfv_{-C}))$ for every corrupted agent $a \in C$, then
    \begin{align*}
        v_a(\calM_a(\bfv_C', \bfv_{-C}))
        \leq R \cdot v_a(\calM_a(\bfv)), \quad \forall a \in C.
    \end{align*}
\end{definition}

The next generalization of incentive ratio quantifies the gain of the \emph{worst} corrupted agent instead of the best one.

\begin{definition}[Group Incentive Ratio]\label{def:gir}
    The \emph{group incentive ratio (GIR)} of a mechanism $\calM$ against coalitions of size $c \geq 1$, denoted as $\gic_{\calM}(c)$, is defined as the smallest value $R \geq 1$ such that for all coalition $C \subseteq [n]$ and manipulated valuation functions $\bfv_C'$ with $|C| \leq c$, there exists a corrupted agent $a \in C$ such that
    \begin{align*}
        v_a(\calM_a(\bfv_C', \bfv_{-C}))
        \leq R \cdot v_a(\calM_a(\bfv)).
    \end{align*}
\end{definition}

It is not hard to verify that the SGIR of a mechanism always upper bounds its GIR.

\begin{corollary}\label{cor:compare-defs-gic}
    For all mechanism $\calM$ and $c \geq 1$, $\gics_{\calM}(c) \geq \gic_{\calM}(c)$.
\end{corollary}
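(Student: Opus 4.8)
The plan is to verify that the constant $R := \gics_{\calM}(c)$ itself satisfies the defining property of the group incentive ratio; since $\gic_{\calM}(c)$ is by definition the \emph{smallest} value with that property, this immediately yields $\gic_{\calM}(c) \le R = \gics_{\calM}(c)$. So I would fix an arbitrary coalition $C \subseteq [n]$ with $|C| \le c$ and arbitrary manipulated valuations $\bfv_C'$, and the goal becomes to exhibit a single corrupted agent $a \in C$ with $v_a(\calM_a(\bfv_C', \bfv_{-C})) \le R \cdot v_a(\calM_a(\bfv))$.

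The key idea is a case split on whether the precondition of the SGIR definition is met, namely whether \emph{every} corrupted agent is weakly better off under $(\bfv_C', \bfv_{-C})$. In the first case, where $v_a(\calM_a(\bfv)) \le v_a(\calM_a(\bfv_C', \bfv_{-C}))$ holds for all $a \in C$, the defining inequality of $\gics_{\calM}(c)$ applies directly and gives $v_a(\calM_a(\bfv_C', \bfv_{-C})) \le R \cdot v_a(\calM_a(\bfv))$ for \emph{all} $a \in C$, so picking any such agent as the witness suffices.

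In the complementary case, some corrupted agent $a \in C$ is strictly worse off, i.e.\ $v_a(\calM_a(\bfv_C', \bfv_{-C})) < v_a(\calM_a(\bfv))$. Since valuations are non-negative and $R \ge 1$, we have $v_a(\calM_a(\bfv)) \le R \cdot v_a(\calM_a(\bfv))$, and chaining the two inequalities gives $v_a(\calM_a(\bfv_C', \bfv_{-C})) \le R \cdot v_a(\calM_a(\bfv))$ for this particular agent. Hence in either case a qualifying agent exists, which establishes the claim.

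There is no genuine obstacle here: the statement is essentially a consequence of the logical shape of the two definitions, with the existential quantifier of GIR being weaker than the (conditional) universal quantifier of SGIR. The only point requiring a moment's care is that the SGIR guarantee is \emph{conditional} on all corrupted agents being weakly better off, so one cannot invoke it unconditionally; the case split on exactly this precondition is what resolves the matter, with any strictly worse-off agent serving as the witness whenever the precondition fails.
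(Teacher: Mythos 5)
Your proof is correct and is exactly the verification the paper leaves implicit (the paper states only that the inequality ``is not hard to verify'' and gives no argument). The case split on whether the SGIR precondition holds, with a strictly worse-off agent serving as the witness when it fails, is the intended and essentially unique route.
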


To see that both SGIR and GIR generalize incentive ratio, note that $\gics_{\calM}(1) = \gic_{\calM}(1) = \ic_{\calM}$ for every mechanism $\calM$.
Moreover, the definitions of SGIR and GIR resemble those of \emph{(weak) group strategyproofness (GSP)}~\cite{DBLP:conf/wine/AzizY14} and \emph{(super) strong Nash equilibrium (SNE)}~\cite{DBLP:journals/ijgt/BraggionGLSS20,DBLP:conf/sigecom/HanSTX23}.
In particular, $\gics_{\calM}(n) = 1$ is equivalent to $\calM$ being GSP, and $\gic_{\calM}(n) = 1$ is equivalent to $\calM$ being weak GSP.
Regarding (super) SNE, slightly varied definitions are adopted in prior work.
Following \cite{DBLP:journals/ijgt/BraggionGLSS20}, $\gics_{\calM}(n) = 1$ implies that truthful telling forms a super SNE, and $\gic_{\calM}(n) = 1$ implies that truthful telling forms a SNE.
\section{Maximum Nash Welfare}

In this section, we consider divisible goods and establish the SGIR and GIR of the Maximum Nash Welfare (MNW) mechanism.
Recall that the \emph{Nash welfare} of an allocation $\bfx$ is defined as the geometric mean of agents' utilities, i.e., $(\prod_{a=1}^n v_a(x_a))^{1 / n}$, and MNW returns an allocation that maximizes Nash welfare, breaking ties arbitrarily.
Assume without loss of generality that each agent has a strictly positive value for at least one good, which ensures the existence of an allocation with strictly positive Nash welfare.
Our main theorem in this section is the following.

\begin{theorem}\label{thm:gic-mnw}
    For every $c \geq 1$, $\gics_{\mnw}(c) = c + 1$ and $\gic_{\mnw}(c) = 2$.
\end{theorem}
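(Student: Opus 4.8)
The plan is to exploit the well-known equivalence between Maximum Nash Welfare allocations and market equilibria in a Fisher market where every agent is endowed with an identical budget, say $1$. Under this correspondence, the MNW allocation for a profile $\bfv$ is exactly the equilibrium allocation of the Fisher market with valuations $\bfv$ and unit budgets, and crucially, additive (in fact linear) valuations satisfy the \emph{weak gross substitute} property. This lets me import the machinery behind \cite[Theorem 3.1]{DBLP:journals/iandc/ChenDTZZ22}, which establishes the incentive ratio $2$ for market equilibrium under weak gross substitutes, and lift it to the collusive setting. The theorem has two parts, and I would attack the lower bounds and upper bounds separately, since they require different techniques.

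For the \textbf{upper bounds}, I first need $\gic_{\mnw}(c) \leq 2$. The key structural fact is that at a market equilibrium, each agent spends his entire budget and only purchases goods maximizing his \emph{bang-per-buck} (value-to-price ratio). Fix a coalition $C$ of size at most $c$ with manipulation $\bfv_C'$, and let $\bfp$ and $\bfp'$ be the equilibrium prices before and after manipulation. The honest agents report truthfully throughout, so their purchasing behavior is pinned down by the prices. I would compare the total amount of money spent by the coalition on goods under the two price vectors: since total money in the market is fixed at $n$, and honest agents' spending is controlled, the aggregate equilibrium-value the coalition can extract is bounded. The standard argument shows that \emph{some} corrupted agent cannot gain more than a factor of $2$, because if \emph{every} coalition member gained more than twice his truthful utility, the aggregate demand/spending accounting would be violated — this yields the $\gic$ bound of $2$. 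For $\gics_{\mnw}(c) \leq c+1$, the weaker conclusion (that merely \emph{one} agent is bounded) no longer suffices; here I would argue that when all $c$ corrupted agents are weakly better off, the total value they jointly extract is bounded by roughly $(c+1)$ times a single agent's truthful utility via a more careful aggregate spending inequality, so no individual can exceed the $c+1$ factor.

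For the \textbf{lower bounds}, I need a matching construction showing $\gics_{\mnw}(c) \geq c+1$ and $\gic_{\mnw}(c) \geq 2$. The $\gic \geq 2$ bound follows from the known incentive ratio lower bound of $2$ for MNW (take $c=1$ and a singleton coalition, which is already a valid coalition of any size). For $\gics \geq c + 1$, I would design an instance with $c$ corrupted agents and carefully chosen honest agents so that a joint misreport lets one designated corrupted agent capture nearly $(c+1)$ times his truthful utility while the other $c-1$ members remain weakly better off (e.g., by having them truthfully ``sacrifice'' their claims on a contested good to the designated agent, consistent with the weak-improvement constraint in the SGIR definition). The construction should push prices so that the designated agent's bang-per-buck on the contested good becomes overwhelmingly favorable after the coalition vacates competition.

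The main obstacle I expect is the \textbf{upper bound for $\gics_{\mnw}(c) \le c+1$}. The difficulty is that the SGIR allows \emph{every} coalition member to gain simultaneously, so I cannot rely on a single agent being ``pinned down'' as in the $\gic$ argument; instead I must track the joint budget-spending of the entire coalition across the two market equilibria and convert an aggregate bound into a per-agent bound of $c+1$. Making the aggregate spending inequality tight enough to yield exactly $c+1$ — rather than a looser multiplicative factor — will require exploiting the weak gross substitute structure precisely, and carefully handling tie-breaking in MNW so that the equilibrium correspondence remains clean. I would also need to verify that the generalization of \cite[Theorem 3.1]{DBLP:journals/iandc/ChenDTZZ22} from unilateral to group deviation does not lose any factor beyond what the accounting forces.
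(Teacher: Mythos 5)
Your overall strategy is the paper's: reduce MNW to Fisher-market equilibria with unit budgets, invoke the bang-per-buck and weak-gross-substitute lemmas of Chen et al., and bound the coalition's aggregate gain. The genuine gap is exactly the step you flag as the main obstacle, the upper bound $\gics_{\mnw}(c)\le c+1$, and the missing idea is simpler than the ``more careful aggregate spending inequality'' you anticipate. Rescale each corrupted agent's valuation so that his truthful equilibrium utility equals $1$ (harmless: scaling a linear/CES utility changes neither his demand set nor the equilibrium). Under the weak-improvement hypothesis one first shows $\bfp\cdot\bfx_a'\ge\bfp\cdot\bfx_a=1$ for each corrupted agent $a$ (otherwise $\bfx_a$ would not be optimal at prices $\bfp$), and then a single accounting argument --- split goods into the set $S$ whose prices decreased and its complement $T$, use WGS to show each honest agent spends weakly more on $S$ after the manipulation, and use the budget constraint on $T$ --- yields the one inequality $\sum_{a=1}^{c}\bigl(v_a(\bfx_a')-v_a(\bfx_a)\bigr)\le c$. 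Both upper bounds fall out of this same inequality by how the gain can be distributed: for GIR, some agent's gain is at most the average $c/c=1$, giving ratio $2$; for SGIR, the weak-improvement constraint forces every gain to be nonnegative, so no single agent's gain exceeds the total $c$, giving ratio $c+1$. No sharper inequality is needed, and without noticing the per-agent normalization and this distribution argument your plan does not close.

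On the lower bounds: your observation that $\gic_{\mnw}(c)\ge 2$ follows already from singleton coalitions is valid (and is a shortcut the paper does not take; it builds an explicit $c$-agent instance instead). But your $\gics_{\mnw}(c)\ge c+1$ construction is only a wish list --- ``have the others sacrifice their claims'' --- and as stated it is not clear it survives MNW's own optimization. The paper's instance is exactly of the form you gesture at and shows what makes it work: $n$ agents and $n$ goods, the $c$ corrupted agents value everything at $1$, honest agents value only goods $c+1,\dots,n$; after the coalition pretends to value only goods $c+1,\dots,n$, goods $1,\dots,c$ are worthless to every reported valuation, so MNW may allocate them arbitrarily, and nearly all of them can be handed to agent $1$ while agents $2,\dots,c$ keep utility exactly $1$; agent $1$'s true utility then tends to $c+1$. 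The construction crucially exploits MNW's arbitrary tie-breaking on zero-valued goods, a mechanism your sketch does not identify; without it, forcing the designated agent to capture the entire surplus while the others remain weakly better off is the nontrivial part.
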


The upper and lower bounds in \Cref{thm:gic-mnw} are proved in Sections~\ref{sec:proof-gic-mnw-ub} and~\ref{sec:proof-gic-mnw-lb}, respectively.

\subsection{Proof of \Cref{thm:gic-mnw}: Upper Bounds}
\label{sec:proof-gic-mnw-ub}

In this subsection, we show that $\gics_{\mnw}(c) \leq c + 1$ and $\gic_{\mnw}(c) \leq 2$.
It is known that the set of allocations maximizing Nash welfare coincides with the set of market equilibria in the corresponding \emph{Fisher market}~\cite[Chapter 5]{DBLP:books/cu/NRTV2007}, and we will prove the upper bounds by analyzing these market equilibria.

We first introduce the notion of market equilibrium.
In a Fisher market, there are still $n$ agents competing for $m$ divisible goods. Each agent has a budget in addition to their preferences. It is sufficient for us to consider the case where all agents have the same budgets, assumed to be $1$ by normalization.
An outcome of the market is denoted by a pair $(\bfx, \bfp)$, where $\bfx$ is an allocation and $\bfp \in \R_{\geq 0}^m$ is a \emph{price vector} with $p_g$ indicating the price of good $g$.
Given a price vector $\bfp$, an \emph{optimal allocation} $\bfx_a$ for an agent $a$ attains the maximum utility $v_a(\bfx_a)$ subject to the budget constraint, i.e., $\bfp\cdot \bfx_a\le 1$.
We use $D_a(\bfp)$ to denote the set of all optimal allocations for $a$ with respect to $\bfp$.
We define the market equilibrium as follows.

\begin{definition}[Market Equilibrium]
    We say that an outcome $(\bfx, \bfp)$ of the market forms a \emph{market equilibrium} if
    \begin{itemize}
        \item all goods are completely allocated: $\sum_{a=1}^n x_{a,g} = 1$ for every good $g \in [m]$,
        \item each agent spends all his budget: $\bfp \cdot \bfx_a = 1$ for every agent $a \in [n]$, and
        \item $\bfx_a \in D_a(\bfp)$ for every agent $a \in [n]$.
    \end{itemize}
\end{definition}

To prove the SGIR and GIR of MNW in our setting where the valuation function of every agent is additive, we consider the Fisher market where the valuation functions of all agents belong to the class of Constant Elasticity of Substitution functions.

\begin{definition}[Constant Elasticity of Substitution]
    A valuation function $v: \R_{\geq 0}^m \to \R_{\geq 0}$ is a \emph{Constant Elasticity of Substitution (CES)} function for $\rho \in (-\infty, 1)$ with $\rho \neq 0$ if $v(\bfx) = (\sum_{g=1}^m u_g x_g^{\rho})^{1 / \rho}$ for some $u_1, \ldots, u_m \geq 0$.
\end{definition}

Additive valuation functions are also CES functions.
As a result, if we can show an upper bound for the SGIR or GIR of MNW for CES valuation functions, then the same upper bound also holds for additive valuation functions in our divisible good setting.
Moreover, CES valuation functions are convenient to work with since they enjoy the \emph{weak gross substitute} property, which informally asserts that if we increase the prices of some goods, then the demand for any other good will not decrease (\Cref{lem:price-change-compare}).

Our goal now is to prove the following lemma, which directly implies the desired upper bounds for the SGIR and GIR of MNW.

\begin{lemma}\label{lem:gic-wgs-mnw}
    Suppose $v_1, \ldots, v_n$ are CES valuation functions, and the valuation functions misreported by corrupted agents are also CES.
    Then, for every $c \geq 1$, $\gics_{\mnw}(c) \leq c + 1$ and $\gic_{\mnw}(c) \leq 2$.
\end{lemma}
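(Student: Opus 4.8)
The plan is to work entirely within the Fisher-market reformulation already set up. Let $(\bfx,\bfp)$ be a market equilibrium for the truthful profile $\bfv$ and $(\bfx',\bfp')$ a market equilibrium for the reported profile $(\bfv_C',\bfv_{-C})$, so that these coincide with the MNW allocations $\calM(\bfv)$ and $\calM(\bfv_C',\bfv_{-C})$ under the MNW--equilibrium correspondence. I will repeatedly use four facts: (i) since every agent has budget $1$ and all goods are sold, $\sum_g p_g=\sum_g p'_g=n$; (ii) at the truthful equilibrium agent $a$ receives a utility-maximizing bundle, so $v_a(\bfx_a)=\max_{\bfp\cdot\bfy\le 1}v_a(\bfy)$; (iii) each CES $v_a$ is homogeneous of degree $1$, so for any bundle $\bfz$ with $\beta:=\bfp\cdot\bfz>0$ one has $v_a(\bfz)=\beta\,v_a(\bfz/\beta)\le \beta\, v_a(\bfx_a)$, the inequality holding because $\bfz/\beta$ is affordable at budget $1$ and using (ii); and (iv) $\bfp'\cdot\bfx'_a=1$ and $x'_{a,g}\le 1$ for every corrupted $a$ (budget exhaustion and feasibility). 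By (iii)--(iv) it suffices to bound the repriced cost $\bfp\cdot\bfx'_a$: a bound of $c+1$ for \emph{every} $a\in C$ yields $\gics_{\mnw}(c)\le c+1$, and a bound of $2$ for \emph{some} $a\in C$ yields $\gic_{\mnw}(c)\le 2$.

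The heart of the argument is a price-movement estimate. Partition the goods into $H=\{g:p'_g>p_g\}$ and $L=[m]\setminus H$, and claim that $\sum_{g\in L}(p_g-p'_g)\le |C|$, i.e. the total price drop is at most the coalition's combined budget. To prove this I track each honest agent's \emph{spending}. By \Cref{lem:price-change-compare} (weak gross substitutes), raising the prices on a set of goods while fixing the rest weakly increases an agent's demand for each unaffected good; since those goods keep their prices, spending on them weakly increases as well. Passing through the intermediate price vector equal to $p'_g$ on $H$ and $p_g$ on $L$, this yields for every honest $b\in[n]\setminus C$ that $b$'s spending on $H$ at $\bfp'$ is at most $b$'s spending on $H$ at $\bfp$: going from $\bfp$ to the intermediate vector fixes $L$ and raises $H$, so $L$-spending rises and hence $H$-spending falls; going from $\bfp'$ to the intermediate vector raises $L$ and fixes $H$, so $H$-spending rises. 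Summing over honest agents and adding the corrupted agents' spending on $H$, which is at most $|C|$, the total spending on $H$ at $\bfp'$, namely $\sum_{g\in H}p'_g$, is at most $\sum_{g\in H}p_g+|C|$. Money conservation (i) converts $\sum_{g\in H}(p'_g-p_g)\le |C|$ into $\sum_{g\in L}(p_g-p'_g)\le |C|$.

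With the estimate in hand the finish is elementary. For any corrupted $a$, split $\bfp\cdot\bfx'_a$ over $H$ and $L$: on $H$ we have $p_g\le p'_g$, so that part is at most $\sum_{g\in H}p'_g x'_{a,g}$, while on $L$ we write $p_g=p'_g+(p_g-p'_g)$; combining with $\bfp'\cdot\bfx'_a=1$ gives $\bfp\cdot\bfx'_a\le 1+\sum_{g\in L}(p_g-p'_g)x'_{a,g}$. For \textbf{SGIR}, bounding $x'_{a,g}\le 1$ and applying the estimate gives $\bfp\cdot\bfx'_a\le 1+|C|\le c+1$, hence $v_a(\bfx'_a)\le (c+1)v_a(\bfx_a)$ for every $a\in C$; note this needs no ``weakly better off'' hypothesis, so it is even stronger than the definition of SGIR requires. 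For \textbf{GIR}, sum the cost bound over $a\in C$ and use feasibility $\sum_{a\in C}x'_{a,g}\le 1$ to obtain $\sum_{a\in C}\bfp\cdot\bfx'_a\le |C|+\sum_{g\in L}(p_g-p'_g)\le 2|C|$; averaging produces some $a\in C$ with $\bfp\cdot\bfx'_a\le 2$, whence $v_a(\bfx'_a)\le 2v_a(\bfx_a)$.

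The main obstacle is the price-movement estimate $\sum_{g\in L}(p_g-p'_g)\le|C|$, which is exactly where weak gross substitutes is essential: I must upgrade the quantity-monotonicity of \Cref{lem:price-change-compare} to spending-monotonicity and chain it across the intermediate price vector, taking care that the two equilibria may select different optimal bundles under ties (so the comparison should be performed at a fixed, consistently chosen demand). The remaining steps are bookkeeping, while the homogeneity fact (iii) together with the equilibrium optimality (ii) are precisely what turn a repriced-cost bound into the desired utility-ratio bound.
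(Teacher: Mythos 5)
Your proof is correct, and although it lives in the same Fisher-market reformulation as the paper's, the execution is genuinely different. The paper bounds the coalition's total utility \emph{gain}: it invokes the duality-based \Cref{lem:bang-per-buck} to get $v_a(\bfx_a')-v_a(\bfx_a)\le r_a\,\bfp\cdot(\bfx_a'-\bfx_a)$ with $r_a\le v_a(\bfx_a)=1$ after normalization (this is where the ``weakly better off'' hypothesis enters, to ensure $\bfp\cdot\bfx_a'\ge\bfp\cdot\bfx_a$), splits over decreased- versus non-decreased-price goods, and concludes $\sum_{a\in C}(v_a(\bfx_a')-v_a(\bfx_a))\le c$, from which SGIR follows because the co-conspirators' gains are non-negative. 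You instead bound the \emph{repriced cost} $\bfp\cdot\bfx_a'$ and convert it into a utility ratio via degree-one homogeneity of CES together with equilibrium optimality, never using \Cref{lem:bang-per-buck}; your price-movement estimate $\sum_{g\in H}(p_g'-p_g)\le|C|$ additionally uses money conservation $\sum_g p_g=\sum_g p_g'=n$, which the paper does not need. Both arguments ultimately rest on applying \Cref{lem:price-change-compare} to the honest agents and on the coalition's total budget $|C|$, and your aggregate inequality $\sum_{a\in C}\bfp\cdot\bfx_a'\le 2|C|$ is essentially the paper's inequality in disguise; but your per-agent SGIR bound (pairing $x_{a,g}'\le 1$ with the total price drop of at most $|C|$) is obtained without assuming any corrupted agent is weakly better off, so you in fact prove the stronger statement that no colluder can ever gain a factor above $c+1$, mirroring what the paper only achieves for PS in \Cref{thm:gics-ps-ub}. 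Two small repairs: you need not chain through an intermediate price vector or ``upgrade quantity-monotonicity to spending-monotonicity''---\Cref{lem:price-change-compare} is already a statement about spending, and applying it with the roles of $\bfp$ and $\bfp'$ swapped (so that its set $S$ becomes your $H$) directly yields that each honest agent spends weakly less on $H$ at $\bfp'$ than at $\bfp$, with arbitrary tie-breaking already covered by the lemma's quantifiers; and the homogeneity step should dispose of the trivial case $\bfp\cdot\bfx_a'\le 1$ separately, since there $\bfx_a'$ is itself affordable at $\bfp$ and the ratio is at most $1$.
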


\begin{proof}
We start by presenting two properties of CES valuation functions.
The first property relates different bundles to their respective resulting utilities, which results from a simple application of strong duality and holds as long as the valuation function is concave.

\begin{lemma}[Proposition 2.1 in \cite{DBLP:journals/iandc/ChenDTZZ22}]\label{lem:bang-per-buck}
    Let $\bfx_a \in D_a(\bfp)$ be an optimal bundle for agent $a$ under the price vector $\bfp$.
    If $v_a$ is a CES valuation function, then there exists $r_a \in [0, v_a(\bfx_a)]$ such that for any other bundle $\bfx_a' \in \R_{\geq 0}^m$,
    \begin{align*}
        r_a \sum_{g=1}^m p_g (x_{a, g} - x_{a, g}')
        \leq v_a(\bfx_a) - v_a(\bfx_a').
    \end{align*}
\end{lemma}

The second property states that if prices are changed, then an agent will spend no less money on those goods whose prices are decreased, and no more money on all other goods.
The proof of the following lemma relies on the weak gross substitute property of CES valuation functions.

\begin{lemma}[Proposition 3.1 in \cite{DBLP:journals/iandc/ChenDTZZ22}]\label{lem:price-change-compare}
    Let $\bfp$ and $\bfp'$ be two price vectors, and let $\bfx \in D_a(\bfp)$ and $\bfx' \in D_a(\bfp')$ be an optimal bundle for agent $a$ under price vectors $\bfp$ and $\bfp'$, respectively.
    Let $S := \{g \in [m] \mid p_g > p_g'\}$ denote the set of goods whose prices are decreased from $\bfp$ to $\bfp'$.
    If $v_a$ is a CES valuation function, then
    \begin{align*}
        \sum_{g \in S} x_g p_g \leq \sum_{g \in S} x_g' p_g'
        \quad \text{and} \quad
        \sum_{g \in [m] \setminus S} x_g' p_g' \leq \sum_{g \in [m] \setminus S} x_g p_g.
    \end{align*}
\end{lemma}

Now, we are ready to prove \Cref{lem:gic-wgs-mnw}.
Fix the coalition size $c \geq 1$.
We relabel the agents so that agents $1, \ldots, c$ are corrupted.
Recall that the set of allocations maximizing Nash welfare coincides with the set of allocations corresponding to market equilibria~\cite[Chapter 5]{DBLP:books/cu/NRTV2007}.
Let $(\bfx, \bfp)$ be a market equilibrium when all agents report truthfully, and let $(\bfx', \bfp')$ be a market equilibrium when the corrupted agents misreport.

We can assume that $v_a(\bfx_a') \geq v_a(\bfx_a)$ for every corrupted agent $a \in [c]$ to fulfill the requirement by the definitions of SGIR and GIR.
We then claim that $\bfp \cdot \bfx_a' \geq \bfp \cdot \bfx_a = 1$. Assume not, when the price vector is $\bfp$, agent $a$ can buy a certain amount of some good on top of $\bfx_a'$ using his remaining budget $1 - \bfp \cdot \bfx_a'$ to get a bundle $\bfx_a''$ satisfying $v_a(\bfx_a'') > v_a(\bfx_a') \geq v_a(\bfx_a)$. As a result, $\bfx_a$ is not an optimal allocation for the agent $a$ as follows, which refutes that $(\bfx, \bfp)$ is a market equilibrium.

By \Cref{lem:bang-per-buck}, there exists $r_a \in [0, v_a(\bfx_a)]$ for every corrupted agent $a \in [c]$ such that
\begin{align*}
    v_a(\bfx_a') - v_a(\bfx_a)
    \leq r_a \sum_{g=1}^m p_g(x_{a, g}' - x_{a, g}).
\end{align*}
For ease of analysis, we scale $v_a$  and $r_a$ so that $v_a(\bfx_a) = 1$, and $r_a \leq v_a(\bfx_a) = 1$.
Note that scaling $v_a$ and $r_a$ does not change $D_a(\bfp)$ since $v_a$ is a CES valuation function. Hence, the market equibirium does not change, either.

We proceed to upper bound the total utility gain of misreporting for the group of corrupted agents.
Let $S := \{g \in [m] \mid p_g > p_g'\}$ denote the set of goods whose prices are decreased from $\bfp$ to $\bfp'$, and let $T := [m] \setminus S$ denote the set of goods whose prices are not decreased.
It follows that
\begin{align}
    \sum_{a=1}^c (v_a(\bfx_a') - v_a(\bfx_a))
    &\leq \sum_{a=1}^c r_a \sum_{g=1}^m p_g(x_{a, g}' - x_{a, g})
    \leq \sum_{a=1}^c \sum_{g=1}^m p_g(x_{a, g}' - x_{a, g}) \notag \\
    &= \sum_{a=1}^c \sum_{g \in S} p_g(x_{a, g}' - x_{a, g}) + \sum_{a=1}^c \sum_{g \in T} p_g(x_{a, g}' - x_{a, g}), \label{eqn:decom-s-t}
\end{align}
where the second inequality holds since $\bfp \cdot \bfx_a' \geq \bfp \cdot \bfx_a$ and $r_a \leq 1$ for every corrupted agent $a \in [c]$.

We show that the first term in \eqref{eqn:decom-s-t} is non-positive.
Since $\bfx_a \in D_a(\bfp)$ and $\bfx_a' \in D_a(\bfp')$ for every honest agent $a \in [c + 1: n]$, by \Cref{lem:price-change-compare}, he spends more money on goods in $S$ as their prices are decreased.
Hence,
\begin{align*}
    \sum_{g \in S} p_g - \sum_{a=1}^c \sum_{g \in S} x_{a, g} p_g
    = \sum_{a=c + 1}^n \sum_{g \in S} x_{a, g} p_g
    \leq \sum_{a=c + 1}^n \sum_{g \in S} x_{a, g}' p_g'
    = \sum_{g \in S} p_g' - \sum_{a=1}^c \sum_{g \in S} x_{a, g}' p_g',
\end{align*}
which implies
\begin{align*}
    \sum_{a=1}^c \sum_{g \in S} x_{a, g} p_g
    \geq \sum_{a=1}^c \sum_{g \in S} x_{a,g}' p_g' + \sum_{g \in S} (p_g - p_g').
\end{align*}
Therefore,
\begin{align*}
    \sum_{a=1}^c \sum_{g \in S} x_{a, g}' p_g - \sum_{a = 1}^c \sum_{g \in S} x_{a, g} p_g
    &\leq \sum_{a=1}^c \sum_{g \in S} x_{a, g}' p_g - \sum_{a=1}^c \sum_{g \in S} x_{a, g}' p_g' - \sum_{g \in S} (p_g - p_g') \\
    &= \sum_{g \in S} \left( \sum_{a=1}^c x_{a, g}' - 1 \right) (p_g - p_g')
    \leq 0,
\end{align*}
where the last inequality holds since $p_g > p_g'$ for every $g \in S$.
This concludes that the first term in \eqref{eqn:decom-s-t} is non-positive.

Next, we bound the second term in \eqref{eqn:decom-s-t} by
\begin{align*}
    \sum_{a=1}^c \sum_{g \in T} p_g (x_{a, g}' - x_{a, g})
    \leq \sum_{a=1}^c \sum_{g \in T} p_g x_{a, g}'
    \leq \sum_{a=1}^c \sum_{g \in T} p_g' x_{a, g}'
    \leq c,
\end{align*}
where the second inequality holds since $p_g \leq p_g'$ for every $g \in T$, and the last inequality holds since the money spent by each agent in the market equilibrium $(\bfx', \bfp')$ does not exceed his budget $1$.

Putting everything together, we derive that the total gain of the collusion is upper bounded by
\begin{align*}
    \sum_{a=1}^c (v_a(\bfx_a') -v_a(\bfx_a)) \leq c.
\end{align*}
Recall that $v_1, \ldots, v_c$ are scaled so that $v_1(\bfx_1) = \ldots = v_c(\bfx_c) = 1$.
Therefore, the GIR is maximized when the gain is split equally among the corrupted agents, which implies $\gic_{\mnw}(c) \leq 2$, and the SGIR is maximized when all gain is assigned to one corrupted agent, which implies $\gics_{\mnw}(c) \leq c + 1$.
This completes the proof.
\end{proof}

\subsection{Proof of \Cref{thm:gic-mnw}: Lower Bounds}
\label{sec:proof-gic-mnw-lb}

In this subsection, we show that $\gics_{\mnw}(c) \geq c + 1$ and $\gic_{\mnw}(c) \geq 2$ for every $c \geq 1$.

    We first show the lower bound for $\gic_{\mnw}(c)$.
    Assume that there are $n > c$ agents with $n$ being sufficiently large, and there are $m = n$ goods.
    The valuation functions are defined as follows.
    For every agent $a \in [c]$, $v_a(g) = 1$ for every good $g \in [m]$.
    For every agent $a \in [c + 1 : n]$,
    \begin{align*}
        v_a(g) =
        \begin{cases}
            0, & g \in [c], \\
            1, & g \in [c + 1: n].
        \end{cases}
    \end{align*}
    One possible allocation $\bfx$ returned by MNW satisfies that good $g$ is entirely allocated to agent $a$ for all $g, a \in [n]$ with $g = a$.
    Under this allocation, the utility of each agent is $1$.

    Now, assume that agents $1, \ldots, c$ form a coalition and collectively manipulate.
    In particular, agent $a \in [c]$ misreports his valuation function as $v_a'$ satisfying
    \begin{align*}
        v_a'(g) =
        \begin{cases}
            0, & g \in [c],\\
            1, & g \in [c + 1: n].
        \end{cases}
    \end{align*}
    In an allocation $\bfx'$ that maximizes Nash welfare with respect to the manipulated profile, since no agent values goods $1, \ldots, c$, they are allowed to be allocated arbitrarily, and we assume that good $g$ is entirely allocated to agent $a$ for all $a, g \in [c]$ with $g = a$.
    For the remaining goods $c + 1, \ldots, n$, since each agent has value $1$ for each of them, the resulting allocation $\bfx'$ must satisfy $\sum_{g=c + 1}^n x_{a, g}' = (n - c) / n$ for every agent $a \in [n]$, i.e., these goods are allocated to all agents uniformly.
    For each corrupted agent $a \in [c]$, his utility with respect to his true preference becomes $v_a(\bfx_a') = 1 + (n - c) / n$, and hence
    \begin{align*}
        \frac{v_a(\bfx_a')}{v_a(\bfx_a)}
        = 2 - \frac{c}{n},
    \end{align*}
    which approaches $2$ as $n \to \infty$.
    Therefore, $\gic_{\mnw}(c) \geq 2$.

    Next, we show the lower bound for $\gics_{\mnw}(c)$.
    For the same instance as above, if we modify the allocation of goods $1, \ldots, c$ in $\bfx'$ so that $\sum_{g=1}^c x_{1, g}' = c(1 - 1 / n)$ and $\sum_{g=1}^c x_{a, g}' = c / n$ for every agent $a \in [2: c]$, then it holds that $v_a(\bfx_a') = 1 = v_a(\bfx_a)$ for every agent $a \in [2: c]$, and
    \begin{align*}
        \frac{v_1(\bfx_1')}{v_1(\bfx_1)}
        = 1 + c - \frac{c}{n},
    \end{align*}
    which approaches $c + 1$ as $n \to \infty$.
    Therefore, $\gics_{\mnw}(c) \geq c + 1$.
\section{Reducing Probabilistic Serial to Round-Robin}

In this section, we consider the Round-Robin (RR) and Probabilistic Serial (PS) mechanisms, both of which are ordinal mechanisms and hence take an ordinal profile as input.
We show that the SGIR and GIR of PS are upper bounded respectively by the SGIR and GIR of RR, which will be utilized to prove the SGIRs and GIRs of both mechanisms.

RR is designed for indivisible goods and works as follows.
In each round, agents alternately select their favorite good among those that still remain unselected.
In particular, we assume that agents with a smaller index always select before agents with a larger index in each round.
We call the process in which an agent picks a good as a \emph{stage}, and there are $m$ stages in total.

Next, we introduce PS, which is defined for allocating divisible goods.
Given an ordering profile as input, each agent starts eating his favorite good at unit speed.
Whenever a good is completely consumed, the agents who are eating it continue to eat their favorite goods that are still available.
The mechanism terminates after all goods are completely consumed, and the fraction of each good received by each agent is proportional to the amount of time that he has spent on eating it.

Although the description of PS is continuous, it admits a polynomial-time combinatorial implementation (see, e.g., \cite{DBLP:conf/wine/Aziz20}), which we present in Mechanism~\ref{alg:probabilistic-serial}.
Given a subset of goods $S \subseteq [m]$ with $S \neq \emptyset$ and a strict ordering $\succ$ over all goods, we use $\max_{\succ} (S)$ to denote the most favorable good in $S$ with respect to $\succ$.
For all subset of goods $S \subseteq [m]$ and good $g \in S$, let $N(g, S) := \{a \in [n] \mid g=\max_{\succ_a} (S)\}$ be the set of agents whose favorite good in $S$ is $g$.
For every subset of goods $S \subseteq [m]$, let $\max_N(S) := \{g \in S \mid \exists a \in [n] \text{ s.t. } g = \max_{\succ_a} (S)\}$ denote the set of goods in $S$ which serves as at least one agent's favorite good in $S$.
On a high level, in Mechanism~\ref{alg:probabilistic-serial}, we iteratively compute the finishing time of each good in order.

\begin{algorithm}[t]
    \caption{Probabilistic Serial}
	\label{alg:probabilistic-serial}
    \begin{algorithmic}[1]
        \Require {An ordinal profile $\succ$}
        \Ensure {A fractional allocation $\bfx$}
        \State {$k \leftarrow 0$} \Comment{$k$ is the step of the mechanism.}
        \State {$S^{(0)} \leftarrow [m]$; $t^{(0)} \leftarrow 0$; $x_{a, g}^{(0)} \leftarrow 0$ for all $a \in [n]$ and $g \in [m]$} \Comment{$S^{(k)}$ is the set of available goods at the end of step $k$, and $\bfx^{(k)}$ is the partial allocation at the end of step $k$.}
        \While {$S^{(k)} \neq \emptyset$}
            \For {$g \in \max_N(S^{(k)})$}
                \State {$t^{(k + 1)}(g) \leftarrow \frac{1 - \sum_{a=1}^n x^{(k)}_{a, g}}{|N(g, S^k)|}$} \Comment{The finishing time of each good that is being eaten.}
            \EndFor
            \State {$t^{(k + 1)} \leftarrow \min_{g \in \max_N(S^{(k)})} t^{(k + 1)}(g)$} \Comment{The minimum time point in which some good is completely consumed.}
            \For {all $a \in [n]$ and $g \in [m]$}
                \If {$a \in N(g, S^{(k)})$}
                    \State {$x_{a, g}^{(k + 1)} \leftarrow x_{a, g}^{(k)} + t^{(k + 1)}$}
                \Else
                    \State {$x_{a, g}^{(k + 1)} \leftarrow x_{a, g}^{(k)}$}
                \EndIf
            \EndFor
            \State {$S^{(k + 1)} \leftarrow S^{(k)} \setminus \{g \in \max_N(S^{(k)}) \mid t^{(k + 1)}(g) = t^{(k + 1)}\}$} \Comment{Remove the finished goods.}
            \State {$k \leftarrow k + 1$}
        \EndWhile
        \State {\Return $\bfx = \bfx^{(k)}$}
    \end{algorithmic}
\end{algorithm}

The main result in this section is the following theorem.

\begin{theorem}\label{thm:gics-ps-grr}
    For every $c \geq 1$, $\gics_{\ps}(c) \leq \gics_{\rr}(c)$ and $\gic_{\ps}(c) \leq \gic_{\rr}(c)$.
\end{theorem}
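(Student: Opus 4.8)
The plan is to leverage the reduction from PS to RR recorded in \Cref{thm:equi-ps-rr}, which asserts that for any ordinal profile $\succ$ the fractional allocation $\ps(\succ)$ coincides with the RR allocation on the transformed instance obtained by replacing each good $g$ with $T = (n!)^m$ indivisible copies and letting every agent rank all copies of $g$ consecutively according to $\succ$. The crucial structural feature is that this reduction preserves the set of agents (only the goods are blown up), so a coalition $C$ of size at most $c$ in the PS instance maps to a coalition of the same size in the RR instance. I would first fix the transformed cardinal valuations, assigning to each copy of $g$ the value $v_a(g)/T$ for agent $a$; this guarantees that the utility an agent derives from the RR allocation, summing over the copies he receives, equals exactly $\sum_g v_a(g) x_{a,g}$, his PS utility. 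Hence the reduction is utility-preserving: for every agent $a$ we have $v_a(\ps_a(\succ)) = v_a(\rr_a(\hat\succ))$, where $\hat\succ$ denotes the transformed profile.

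Next I would translate an arbitrary coalitional manipulation of PS into one of RR. Given manipulated orderings $\succ_C'$ for the corrupted agents, the corresponding RR manipulation has each corrupted agent report the grouped ordering induced by $\succ_a'$, while each honest agent reports the grouped ordering induced by his true $\succ_b$. Applying \Cref{thm:equi-ps-rr} to both the truthful profile $\succ$ and the manipulated profile $(\succ_C', \succ_{-C})$, and invoking the utility preservation above, I obtain for every agent $a$ that $v_a(\ps_a(\succ_C', \succ_{-C})) = v_a(\rr_a(\hat\succ_C', \hat\succ_{-C}))$ as well as $v_a(\ps_a(\succ)) = v_a(\rr_a(\hat\succ))$. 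Thus every manipulation ratio in the PS instance equals the corresponding ratio in the RR instance, and the weak-improvement premise for the coalition transfers verbatim between the two instances.

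The two inequalities then follow by unwinding the definitions. For the SGIR bound, suppose a PS manipulation makes every corrupted agent weakly better off; then the image RR manipulation does as well, so by the definition of $\gics_{\rr}(c)$ every corrupted agent's RR ratio is at most $\gics_{\rr}(c)$, and by the ratio equality the same holds in PS, giving $\gics_{\ps}(c) \leq \gics_{\rr}(c)$. For the GIR bound, the definition of $\gic_{\rr}(c)$ guarantees a corrupted agent whose RR ratio is at most $\gic_{\rr}(c)$; the same agent witnesses the bound in PS, giving $\gic_{\ps}(c) \leq \gic_{\rr}(c)$. Since the RR bounds hold for every instance and valuation profile, they apply in particular to the transformed instance, so no extra quantifier bookkeeping over $T$ or over the blown-up good set is required.

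The main obstacle I anticipate lies in verifying that the manipulation map is legitimate on both sides simultaneously: I must ensure that the honest agents' RR reports are exactly the transforms of their true PS orderings, so that honesty is respected in the RR instance, and that the grouped orderings reported by the coalition are admissible RR manipulations. Both hold because \Cref{thm:equi-ps-rr} applies to arbitrary ordinal profiles, and the space of grouped orderings is a subset of all orderings over the copies, so the universal RR bounds cover them. The genuinely delicate ingredient, the equivalence \Cref{thm:equi-ps-rr} itself, is assumed here; the cardinal details of scaling each copy's value by $1/T$ and checking that the copy-level allocation sums to the correct fractions are routine once that reduction is in hand.
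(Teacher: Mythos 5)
Your proposal is correct and follows essentially the same route as the paper: transform the instance via \Cref{thm:equi-ps-rr}, endow each of the $T$ copies of good $g$ with value $v_a(g)/T$ so that utilities are preserved exactly (this is the paper's \Cref{lem:con-uti-ps-rr}), map each PS manipulation to the induced grouped-ordering RR manipulation, and observe that both the weak-improvement premise and every manipulation ratio transfer verbatim, so the worst-case RR bounds apply. No substantive differences to report.
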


We first show that PS can be reduced to RR in \Cref{sec:reduce-ps-rr}, and then we leverage this reduction to prove \Cref{thm:gics-ps-grr} in \Cref{sec:proof-gics-ps-grr}.

\subsection{Reduction}
\label{sec:reduce-ps-rr}

In this subsection, we show that PS can be reduced to RR.
Specifically, given an instance with divisible goods, we will transform it into an instance with indivisible goods by performing a sufficiently fine-grained discretization of each divisible good.
That is, we divide each divisible good into sufficiently many \emph{indivisible} copies and run RR on the new instance with indivisible goods.
Then, we show that the fractional allocation induced by the outcome of RR recovers the allocation returned by PS.
The implementation of PS via RR is formally described in Mechanism~\ref{alg:probabilistic-serial-by-round-robin}.
We remark that even though our reduction is inefficient, time complexity is not the primary concern here, and the reduction only serves a characterization purpose.

\begin{algorithm}[t]
    \caption{Implementation of Probabilistic Serial via Round-Robin}
	\label{alg:probabilistic-serial-by-round-robin}
    \begin{algorithmic}[1]
        \Require {An ordinal profile $\succ$}
        \Ensure {A fractional allocation $\bfx$}
        \State {$T \leftarrow (n!)^m$; $G(g) \leftarrow \{g_k \mid k \in [T]\}$ for every $g \in [m]$; $G \leftarrow \bigcup_{g=1}^m G(g)$ } \label{code:ps-rr-init}
        \State {Let $\succ'$ be an ordering profile over $G$ satisfying for all $a \in [n]$ and $g_k, g_{k'}' \in G$, $g_k \succ_i' g_{k'}'$ if and only if $g \succ_i g'$ or $g = g'$ and $k < k'$ } \label{code:ps-rr-order}
        \State {$A \leftarrow \rr(\succ')$}
        \For {all $a \in [n]$ and $g \in [m]$}
            \State {$x_{a, g} \leftarrow \frac{|A_a \cap G(g)|}{T}$}
        \EndFor
        \State {\Return $\bfx$}
    \end{algorithmic}
\end{algorithm}

The following theorem asserts that Mechanisms~\ref{alg:probabilistic-serial} and~\ref{alg:probabilistic-serial-by-round-robin} are equivalent in the sense that they always return the same allocations.

\begin{theorem}\label{thm:equi-ps-rr}
    Mechanisms~\ref{alg:probabilistic-serial} and~\ref{alg:probabilistic-serial-by-round-robin} are equivalent.
\end{theorem}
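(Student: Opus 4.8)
The plan is to couple the continuous execution of Mechanism~\ref{alg:probabilistic-serial} (PS) with the discrete execution of Mechanism~\ref{alg:probabilistic-serial-by-round-robin} (PS-via-RR) and show that, after the appropriate number of RR stages, the fractional allocation induced by the RR outcome coincides exactly with the partial PS allocation at each of PS's breakpoints $t^{(k)}$. The key conceptual identification is that ``time'' in PS corresponds to ``number of RR stages divided by $T$'': since each agent eats at unit speed and RR gives each agent one copy per round, after $r$ full rounds of RR each agent has acquired $r$ copies, i.e.\ a $r/T$ fraction of goods in total, mirroring the fact that after time $t$ in PS each agent has eaten a total mass $t$. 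I would therefore prove by induction on PS's step counter $k$ that the set of available goods $S^{(k)}$ in PS equals the set of goods that still have unselected copies after $T \cdot t^{(k)}$ RR stages, and that the fractions match: $|A_a^{(\text{after } T t^{(k)} \text{ stages})} \cap G(g)|/T = x_{a,g}^{(k)}$ for every $a,g$.

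First I would establish the base case ($k=0$) trivially, then fix a PS step and analyze what RR does during the time interval $[t^{(k)}, t^{(k+1)})$. During this interval, in PS each agent in $N(g, S^{(k)})$ eats good $g$ at unit speed, and the goods in $\max_N(S^{(k)})$ are precisely those being consumed; the step ends when some good is exhausted. In RR, during the corresponding block of stages, each agent whose favorite \emph{available} good is $g$ repeatedly picks copies of $g$ (because the relabeling $\succ'$ orders copies of the same good by index, so an agent exhausts all available copies of his current favorite before moving on). The crucial synchronization claim is that the $N(g, S^{(k)})$ agents all pick copies of $g$ in lockstep, so that after $\Delta$ rounds each has taken $\Delta$ copies of $g$, matching the PS increment $x_{a,g}^{(k+1)} = x_{a,g}^{(k)} + t^{(k+1)}$ once we set $\Delta = T \cdot t^{(k+1)}$. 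This forces me to verify that $T \cdot t^{(k+1)}(g)$ is an integer for every good and every step, which is exactly where the choice $T = (n!)^m$ enters: each finishing-time increment has the form $(\text{remaining mass})/|N(g,S^{(k)})|$, and remaining masses are themselves built from such rational increments with denominators dividing products of the group sizes $|N(g, S^{(k)})| \le n$; bounding the number of distinct denominators encountered across all $m$ goods by $n!$ each gives divisibility by $(n!)^m$.

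The main obstacle will be handling \textbf{ties and simultaneous exhaustion} cleanly: when several goods finish at the same PS time $t^{(k+1)}$, or when an agent's favorite good runs out exactly at a round boundary, the discrete RR process must transition agents to their next-favorite good in a way that still matches PS's simultaneous reassignment of eaters. I would resolve this by arguing that at a breakpoint the number of remaining copies of each surviving good is an exact integer multiple of $T \cdot (\text{mass})$, so no partial copy is ever ``stuck'' mid-round, and by checking that the index-based tie-breaking in $\succ'$ (copies of a fixed good ordered by subscript, distinct goods ordered consistently with $\succ_a$) guarantees each agent finishes all currently-available copies of his favorite good before any agent touches a lower-ranked good. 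The second delicate point is confirming that the divisibility constant $(n!)^m$ is genuinely sufficient uniformly across all steps and all instances; I would make this rigorous by tracking, via the induction hypothesis, that every quantity $x_{a,g}^{(k)}$ and every remaining mass $1 - \sum_a x_{a,g}^{(k)}$ at a breakpoint is an integer multiple of $1/T$, which both feeds the integrality of the next $T \cdot t^{(k+1)}(g)$ and is preserved by the update. Once the inductive invariant is established through the final step where $S^{(k)} = \emptyset$, the equality of the two returned allocations follows immediately from the definition of $x_{a,g}$ in Mechanism~\ref{alg:probabilistic-serial-by-round-robin}.
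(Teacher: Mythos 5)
Your proposal follows essentially the same route as the paper: couple each PS step $k$ with a block of $T\cdot t^{(k)}$ RR rounds, prove by induction that the sets of goods with remaining copies and the per-agent fractions agree at every breakpoint, and use the bound $t^{(k+1)}(g)\geq t^{(k+1)}$ implicitly to guarantee that no $G(g)$ with $g\in\max_N(S^{(k)})$ runs out mid-block. The one step that would fail as written is your ``rigorous'' integrality invariant. You propose to maintain that every $x_{a,g}^{(k)}$ and every remaining mass $1-\sum_a x_{a,g}^{(k)}$ is an integer multiple of $1/T$ and claim this ``feeds the integrality of the next $T\cdot t^{(k+1)}(g)$ and is preserved by the update.'' It does not: if the remaining mass is $j/T$, then $T\cdot t^{(k+1)}(g)=j/|N(g,S^{(k)})|$, which need not be an integer, so the invariant is not self-propagating. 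The correct invariant (which is what the paper proves, and which your second paragraph already gestures at with the accumulating denominators) must strengthen with the step index: $x_{a,g}^{(k)}\cdot (n!)^k$ and $t^{(k)}\cdot(n!)^k$ are integers, each step contributing one extra factor of $|N|\leq n$ absorbed by $n!$; one then concludes because PS terminates in $K\leq m$ steps, so $(n!)^K$ divides $T=(n!)^m$. With that repair your argument is the paper's proof.
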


The rest of this subsection is devoted to proving \Cref{thm:equi-ps-rr}.
Given an ordinal profile $\succ$, let $\bfx = \ps(\succ)$ be the fractional allocation returned by PS, and let $A = \rr(\succ')$ be the integral allocation returned by $\rr$ on ordinal profile $\succ'$ defined in Mechanism~\ref{alg:probabilistic-serial-by-round-robin}.
\Cref{thm:equi-ps-rr} is equivalent to
\begin{align}\label{eqn:equi-ps-rr-per-agent-good}
    x_{a, g} = \frac{|A_a \cap G(g)|}{T}
\end{align}
for all $a \in [n]$ and $g \in [m]$.
We first characterize the execution of Mechanism~\ref{alg:probabilistic-serial} on input $\succ$.

Assume that Mechanism~\ref{alg:probabilistic-serial} terminates after $K$ steps, and note that $K \leq m$ since at least one good is completely consumed after each step.
We first show that $(n!)^k$ is a multiple of the denominator of $t^{(k)}$ for every $k \leq K$, i.e., $t^{(k)} \cdot (n!)^k$ is an integer, where $t^{(k)}$ is defined in Mechanism~\ref{alg:probabilistic-serial}.

\begin{lemma}\label{lem:int-tk-dot-factnk}
    For every $k \in [0: K]$, $t^{(k)} \cdot (n!)^k$ is an integer.
\end{lemma}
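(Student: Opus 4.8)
The plan is to prove the statement by induction on $k$, but the claim as phrased is too weak to close the induction on its own: the value $t^{(k)}$ is computed by dividing a quantity that depends on the partial allocation $\bfx^{(k-1)}$ by the integer $|N(g, S^{(k-1)})|$, so I first need to control the denominators appearing in $\bfx^{(k-1)}$ itself. Accordingly, I would strengthen the claim and prove \emph{jointly}, for every $k \in [0:K]$, that both $t^{(k)} \cdot (n!)^k$ is an integer \emph{and} $x_{a,g}^{(k)} \cdot (n!)^k$ is an integer for all $a \in [n]$ and $g \in [m]$. The original lemma is then the first half of this strengthened statement.

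The base case $k = 0$ is immediate, since $t^{(0)} = 0$ and $x_{a,g}^{(0)} = 0$, and each is multiplied by $(n!)^0 = 1$. The single arithmetic fact powering the induction is that the number of agents eating a good at any step, $|N(g, S^{(k)})|$, lies between $1$ and $n$ and therefore divides $n!$. For the inductive step, assume the strengthened claim for $k - 1$. For each candidate good $g \in \max_N(S^{(k-1)})$, the defining formula of Mechanism~\ref{alg:probabilistic-serial} yields
\begin{align*}
    t^{(k)}(g) \cdot (n!)^k
    = \left( \Big( 1 - \sum_{a=1}^n x_{a,g}^{(k-1)} \Big) (n!)^{k-1} \right) \cdot \frac{n!}{|N(g, S^{(k-1)})|}.
\end{align*}
The first factor is an integer by the inductive hypothesis applied to each $x_{a,g}^{(k-1)}$, and the second factor is an integer because $|N(g, S^{(k-1)})|$ divides $n!$. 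Hence $t^{(k)}(g) \cdot (n!)^k$ is an integer for every such $g$, and since $t^{(k)} = \min_g t^{(k)}(g)$ equals one of these values, $t^{(k)} \cdot (n!)^k$ is an integer, which already establishes the lemma. For the allocation half, the update rule sets $x_{a,g}^{(k)}$ to either $x_{a,g}^{(k-1)}$ or $x_{a,g}^{(k-1)} + t^{(k)}$; in both cases $x_{a,g}^{(k)} \cdot (n!)^k$ is a sum of $\big( x_{a,g}^{(k-1)} (n!)^{k-1} \big) \cdot n!$ and possibly $t^{(k)} \cdot (n!)^k$, each of which is an integer (the former by the hypothesis, the latter by what we just proved), closing the induction.

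The main obstacle, and indeed the only genuinely non-obvious step, is recognizing that the naive single-quantity induction does not close and that one must carry the denominators of the partial allocations $\bfx^{(k)}$ alongside those of $t^{(k)}$. Once this joint hypothesis is in place, the remaining work is routine divisibility bookkeeping resting entirely on the bound $|N(g, S^{(k)})| \le n$; in particular, this bound is exactly where the choice of growth factor $n!$ per step (rather than a smaller modulus) becomes necessary, since a single good may be eaten simultaneously by any subset of the $n$ agents.
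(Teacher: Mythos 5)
Your proposal is correct and follows essentially the same route as the paper: a joint induction carrying both $t^{(k)}\cdot(n!)^k$ and $x_{a,g}^{(k)}\cdot(n!)^k$, with the inductive step resting on the facts that $|N(g,S^{(k-1)})|\in[n]$ divides $n!$ and that $t^{(k)}$ equals $t^{(k)}(g')$ for a minimizing good $g'$. The only cosmetic difference is that you verify integrality of $t^{(k)}(g)\cdot(n!)^k$ for every candidate $g$ before taking the minimum, while the paper fixes the minimizer directly; this changes nothing of substance.
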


\begin{proof}
    We inductively prove that $t^{(k)} \cdot (n!)^k$ is an integer together with the statement that $x_{a, g}^{(k)} \cdot (n!)^k$ is an integer for all $a \in [n]$ and $g \in [m]$, where $x_{a, g}^{(k)}$ is defined in Mechanism~\ref{alg:probabilistic-serial}.
    When $k = 0$, the statements straightforwardly hold since $t^{(0)} = 0$ and $x_{a, g}^{(0)} = 0$ for all $a \in [n]$ and $g \in [m]$.

    Assume for induction that the statements hold for $k = h - 1$ with $h \in [K]$, and we show that the statements also hold for $k = h$.
    Note that
    \begin{align*}
        t^{(h)}
        = t^{(h)}(g')
        = \frac{1 - \sum_{a=1}^n x_{a, g'}^{(h - 1)}}{|N(g', S^{(h - 1)})|}
    \end{align*}
    for some $g' \in \max_N(S^{(h - 1)})$ satisfying $|N(g', S^{(h - 1)})| \in [n]$.
    Since $x_{a, g'}^{(h - 1)} \cdot (n!)^{h - 1}$ is an integer for every $a \in [n]$ by the inductive hypothesis, $(1 - \sum_{a=1}^n x_{a, g'}^{(h - 1)}) \cdot (n!)^{h - 1}$ is also an integer.
    Moreover, since $|N(g', S^{(h - 1)})| \in [n]$, it follows that
    \begin{align*}
        t^{(h)} \cdot (n!)^h
        = \frac{1 - \sum_{a=1}^n x_{a, g'}^{(h - 1)}}{|N(g', S^{(h - 1)})|} \cdot (n!)^h
        = \left( 1 - \sum_{a=1}^n x_{a, g'}^{(h - 1)} \right) \cdot (n!)^{h - 1} \cdot \frac{n!}{|N(g', S^{(h - 1)})|}
    \end{align*}
    is an integer as well.
    Finally, for all $a \in [n]$ and $g \in [m]$, since both $x_{a, g}^{(h - 1)} \cdot (n!)^h$ and $t^{(h)} \cdot (n!)^h$ are integers,
    \begin{align*}
        x_{a, g}^{(h)} \cdot (n!)^h
        = x_{a, g}^{(h - 1)} \cdot (n!)^h + \mathbf{1}\{a \in N(g, S^{(h - 1)})\} \cdot t^{(h)} \cdot (n!)^h
    \end{align*}
    is also an integer, concluding the proof.
\end{proof}

Next, the following lemma provides two equivalent expressions for the total assigned fraction of each good during the first $h$ steps of Mechanism~\ref{alg:probabilistic-serial}.

\begin{lemma}\label{lem:two-form-of-assigned-fraction}
    For all $h \in [0: K]$ and $g \in [m]$,
    \begin{align}\label{eqn:two-expre-assigned-frac}
        \sum_{k=1}^h t^{(k)} \cdot |N(g, S^{(k - 1)})|
        = \sum_{a=1}^n x^{(h)}_{a, g}.
    \end{align}
\end{lemma}

\begin{proof}
    Since $|N(g, S^{(k - 1)})|$ denotes the number of agents eating $g$ during step $k$ of Mechanism~\ref{alg:probabilistic-serial}, $t^{(k)} \cdot |N(g, S^{(k - 1)})|$ represents the total assigned fraction of $g$ during step $k$.
    Hence, the LHS of \eqref{eqn:two-expre-assigned-frac} equals the total assigned fraction of $g$ during the first $h$ steps, which coincides with the interpretation of the RHS of \eqref{eqn:two-expre-assigned-frac}.
\end{proof}

Now, we couple the executions of Mechanism~\ref{alg:probabilistic-serial} on input $\succ$ and $\rr(\succ')$.
Note that $\rr(\succ')$ consists of $R := mT / n$ rounds.
In particular, when step $k$ of Mechanism~\ref{alg:probabilistic-serial} is finished, we proceed $\rr(\succ')$ by $T \cdot t^{(k)}$ rounds.
The coupling is well-defined since $T \cdot t^{(k)} \geq 0$ is an integer by \Cref{lem:int-tk-dot-factnk} and the fact that $K \leq m$, and we have
\begin{align*}
    \sum_{k=1}^K T \cdot t^{(k)}
    = T \cdot \frac{m}{n}
    = R.
\end{align*}
Due to the coupling, for every $k \in [0: K]$, we call the execution of $\rr(\succ')$ from the beginning of round $T \cdot \sum_{i=1}^{k-1} t^{(i)} + 1$ to the end of round $T \cdot \sum_{i=1}^k t^{(i)}$ as step $k$ of $\rr(\succ')$.

For all $a \in [n]$ and $k \in [K]$, let $g_{k, a}$ denote the good eaten by agent $a$ during step $k$ of Mechanism~\ref{alg:probabilistic-serial}, i.e., $g_{k, a}$ denotes the good $g$ satisfying $a \in N(g, S^{(k - 1)})$.
We show that during step $k$ of $\rr(\succ')$, every agent $a$ receives $T \cdot t^{(k)}$ goods in $G(g_{k, a})$.
Since step $k$ of $\rr(\succ')$ consists of exactly $T \cdot t^{(k)}$ rounds, it further follows that agent $a$ does not receive any other goods during step $k$ of $\rr(\succ')$.

\begin{lemma}\label{lem:rr-per-step-allo}
    For all $a \in [n]$ and $k \in [0: K]$, agent $a$ receives $T \cdot t^{(k)}$ goods in $G(g_{k, a})$ during step $k$ of $\rr(\succ')$, where we define $G(g_{k, a}) = \emptyset$ for $k = 0$.
\end{lemma}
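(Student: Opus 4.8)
The plan is to prove the lemma by induction on $k$, coupling the state of $\rr(\succ')$ at the start of step $k$ with the partial allocation $\bfx^{(k-1)}$ produced by Mechanism~\ref{alg:probabilistic-serial}. The invariant I maintain at the beginning of step $k$ is that each agent $a$ holds exactly $T\cdot x^{(k-1)}_{a,g}$ copies from $G(g)$ for every $g\in[m]$. Summing over agents, this immediately gives that the number of unallocated copies remaining in $G(g)$ equals $T\cdot(1-\sum_{b=1}^n x^{(k-1)}_{b,g})$, which is strictly positive exactly when $g\in S^{(k-1)}$ and is zero otherwise (using that a good removed by PS is fully consumed, i.e., $\sum_b x^{(k-1)}_{b,g}=1$ once $g\notin S^{(k-1)}$, and noting that Lemma~\ref{lem:two-form-of-assigned-fraction} identifies $\sum_b x^{(k-1)}_{b,g}$ with the total amount of $g$ eaten in the first $k-1$ PS steps). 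The base case $k=0$ is immediate since $t^{(0)}=0$, $x^{(0)}_{a,g}=0$, and $G(g_{0,a})=\emptyset$. In the inductive step, once the claim for step $k$ is established, the PS update rule $x^{(k)}_{a,g}=x^{(k-1)}_{a,g}+t^{(k)}\mathbf{1}\{a\in N(g,S^{(k-1)})\}$ propagates the invariant to step $k+1$.

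The heart of the argument is to show that during the $T\cdot t^{(k)}$ rounds constituting step $k$ of $\rr(\succ')$, each agent $a$ picks exactly copies from $G(g_{k,a})$, where $g_{k,a}=\max_{\succ_a}(S^{(k-1)})$. I first record the key capacity inequality: since $t^{(k)}=\min_{g\in\max_N(S^{(k-1)})} t^{(k)}(g)\le t^{(k)}(g)=\frac{1-\sum_b x^{(k-1)}_{b,g}}{|N(g,S^{(k-1)})|}$ for every $g\in S^{(k-1)}$, the copies of $g$ available at the start of step $k$ satisfy $T\cdot(1-\sum_b x^{(k-1)}_{b,g})\ge |N(g,S^{(k-1)})|\cdot T t^{(k)}$. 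In words, $G(g)$ holds enough copies to supply all $|N(g,S^{(k-1)})|$ agents whose favorite available good is $g$ for the entire step without running out. I also note that $T t^{(k)}$ is a nonnegative integer by Lemma~\ref{lem:int-tk-dot-factnk} together with $k\le m$, so the coupling is well-defined.

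With these facts in hand, I will run an inner induction over the rounds of step $k$, and within each round over the agents in increasing index order, maintaining the invariant that every agent $a$ has so far picked only copies from $G(g_{k,a})$. When agent $a$ is about to pick, every good $\succ_a$-preferred to $g_{k,a}$ lies outside $S^{(k-1)}$ (as $g_{k,a}$ is $a$'s most preferred good in $S^{(k-1)}$), hence has no copies left by the invariant above, and $\rr$ never replenishes copies; so the $\succ'_a$-best available copy is a copy of $g_{k,a}$, provided one remains. By the inner invariant, only agents in $N(g_{k,a},S^{(k-1)})$ ever consume copies of $g_{k,a}$, so the number consumed before $a$'s pick is strictly less than $T t^{(k)}\cdot|N(g_{k,a},S^{(k-1)})|$, which the capacity inequality bounds by the available supply; hence a copy of $g_{k,a}$ remains and $a$ picks it. After all $T t^{(k)}$ rounds, each agent has received exactly $T t^{(k)}$ copies from $G(g_{k,a})$, proving the claim.

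The main obstacle I anticipate is the bookkeeping in this inner induction: because $\rr$ processes agents sequentially within a round, I must rule out both that an agent is diverted by a more-preferred available copy and that $g_{k,a}$'s copies are exhausted prematurely. Both risks are controlled by the invariant (the zero-copy status of already-removed goods) and the capacity inequality, but establishing rigorously the statement that no agent outside $N(g_{k,a},S^{(k-1)})$ ever touches $G(g_{k,a})$ forces the simultaneous induction over rounds and within-round agent order rather than a single clean counting step.
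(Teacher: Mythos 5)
Your proposal is correct and follows essentially the same route as the paper: an outer induction on the step $k$ maintaining the invariant that the remaining copies in $G(g)$ number $T(1-\sum_b x^{(k-1)}_{b,g})$ (via Lemma~\ref{lem:two-form-of-assigned-fraction}), combined with the capacity bound $T t^{(k)}\cdot|N(g,S^{(k-1)})|\le \ell_{k,g}$ derived from $t^{(k)}\le t^{(k)}(g)$. The inner induction over rounds and within-round agent order that you add is a more explicit version of a step the paper treats informally, but it is the same argument.
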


\begin{proof}
    We prove the lemma by induction on $k$.
    When $k = 0$, the statement straightforwardly holds since $t^{(0)} = 0$.
    Assume for induction that the statement holds for $k < h$ with $h \in [K]$, and we show that the statement also holds for $k = h$.
    At the beginning of step $h$ of $\rr(\succ')$, for every $g \in [m]$, the number of available goods in $G(g)$, denoted as $\ell_{h, g}$, equals
    \begin{align}\label{eqn:remain-avail-goods}
        \ell_{h, g}
        = T - \sum_{k=1}^{h - 1} T \cdot t^{(k)} \cdot |N(g, S^{(k - 1)})|
        = T \left( 1 - \sum_{a=1}^n x_{a, g}^{(h - 1)} \right),
    \end{align}
    where the first equality holds by the inductive hypothesis, and the second equality holds by \Cref{lem:two-form-of-assigned-fraction}.
    Furthermore, \eqref{eqn:remain-avail-goods} implies that for every $g \in [m]$, goods in $G(g)$ are completely consumed at the beginning of step $h$ of $\rr(\succ')$ if and only if $g$ is completely consumed at the beginning of step $h$ of Mechanism~\ref{alg:probabilistic-serial}.
    Hence, by the definition of $g_{h, a}$ and the construction of $\succ'$, each agent $a$ prefers goods in $G(g_{h, a})$ to all other goods in $G$ available at the beginning of step $h$ of $\rr(\succ')$.
    Recall that $S^{(h - 1)}$ denotes the set of available goods at the beginning of step $h$ of Mechanism~\ref{alg:probabilistic-serial}.
    As a result, to show that agent $a$ receives $T \cdot t^{(h)}$ goods in $G(g_{h, a})$ during step $h$ of $\rr(\succ')$, it suffices to show that $G(g)$ is not completely consumed before the end of step $h$ of $\rr(\succ')$ for every $g \in \max_N(S^{(h - 1)})$.
    
    Fix $g \in \max_N(S^{(h - 1)})$.
    Since $t^{(h)}(g) \geq t^{(h)}$,
    \begin{align}\label{ngsh-1-ub}
        |N(g, S^{(h - 1)})|
        = \frac{1 - \sum_{a=1}^n x_{a, g}^{(h - 1)}}{t^{(h)}(g)}
        \leq \frac{1 - \sum_{a=1}^n x_{a, g}^{(h - 1)}}{t^{(h)}}
        = \frac{\ell_{h, g}}{T \cdot t^{(h)}},
    \end{align}
    where the first equality holds by the definition of $t^{(h)}(g)$, and the last equality holds by \eqref{eqn:remain-avail-goods}.
    Hence, during step $h$ of $\rr(\succ')$, suppose that all agents in $N(g, S^{(h - 1)})$ keep consuming goods in $G(g)$, and it follows that the number of consumed goods in $G(g)$ equals
    \begin{align*}
        T \cdot t^{(h)} \cdot |N(g, S^{(h - 1)})|
        \leq T \cdot t^{(h)} \cdot \frac{\ell_{h, g}}{T \cdot t^{(h)}}
        = \ell_{h, g},
    \end{align*}
    where the inequality holds by \eqref{ngsh-1-ub}.
    Therefore, $G(g)$ contains a sufficient number of available good at step $h$ of $\rr(\succ')$, concluding the proof.
\end{proof}

Finally, we are ready to prove \Cref{thm:equi-ps-rr}.

\begin{proof}[Proof of \Cref{thm:equi-ps-rr}]
    Fix $a \in [n]$ and $g \in [m]$, and recall that our goal is to prove \eqref{eqn:equi-ps-rr-per-agent-good}.
    By the description of Mechanism~\ref{alg:probabilistic-serial},
    \begin{align}\label{eqn:formal-xig}
        x_{a, g}
        = \sum_{k=1}^K t^{(k)} \cdot \mathbf{1}\{g_{k, a} = g\}.
    \end{align}
    Moreover, by \Cref{lem:rr-per-step-allo},
    \begin{align}\label{eqn:formal-aiggt}
        \frac{|A_a \cap G(g)|}{T}
        = \frac{1}{T} \sum_{k=1}^K T \cdot t^{(k)} \cdot \mathsf{1}\{g_{k, a} = g\}
        = \sum_{k=1}^K t^{(k)} \cdot \mathbf{1}\{g_{k, a} = g\}.
    \end{align}
    Finally, \eqref{eqn:equi-ps-rr-per-agent-good} follows by combining \eqref{eqn:formal-xig} and \eqref{eqn:formal-aiggt}, which concludes the proof.
\end{proof}

\subsection{Proof of \Cref{thm:gics-ps-grr}}
\label{sec:proof-gics-ps-grr}

We only prove the statement for SGIR, and it can be established analogously for GIR.
Fix $c \geq 1$, and we aim to show that $\gics_{\ps}(c) \leq \gics_{\rr}(c)$.
Given a valuation profile $\bfv$ with a consistent ordering profile $\succ$, we construct an ordering profile $\succ'$ over a set of goods $G$ defined as in Lines~\ref{code:ps-rr-init} and~\ref{code:ps-rr-order} of Mechanism~\ref{alg:probabilistic-serial-by-round-robin}.
Moreover, we construct a valuation profile $\bfv'$ on $G'$ such that for all agent $a \in [n]$ and subset of goods $S \subseteq G$,
\begin{align*}
    v_a'(S)
    = \frac{1}{T} \sum_{g=1}^m v_a(g) \cdot |S \cap G(g)|,
\end{align*}
where $T$ and $G(g)$ are defined in Line~\ref{code:ps-rr-init} of Mechanism~\ref{alg:probabilistic-serial-by-round-robin}.
It is easy to verify that $\bfv'$ is an additive valuation profile.

The following lemma connects agents' utilities under different valuation/ordering profiles.

\begin{lemma}\label{lem:con-uti-ps-rr}
    For every subset of goods $S \subseteq G$, let $\bfx \in \R_{\geq 0}^m$ be a fractional bundle of $[m]$ satisfying $x_g = |S \cap G(g)| / T$ for every $g \in [m]$.
    Then, $v_a'(S) = v_a(\bfx)$ for every agent $a \in [n]$.
\end{lemma}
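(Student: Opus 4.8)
The plan is to prove the identity by directly unfolding the definitions of $v_a(\bfx)$ and $v_a'(S)$ and checking that they coincide. Recall that for divisible goods the value of a fractional bundle $\bfx \in \R_{\geq 0}^m$ is $v_a(\bfx) = \sum_{g=1}^m v_a(g) \cdot x_g$, while by the construction of $\bfv'$ we have $v_a'(S) = \frac{1}{T} \sum_{g=1}^m v_a(g) \cdot |S \cap G(g)|$. Since the fractional bundle $\bfx$ in the statement is \emph{defined} precisely so that $x_g = |S \cap G(g)| / T$, the two quantities should match after a single substitution.

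Concretely, I would substitute the defining relation $x_g = |S \cap G(g)| / T$ into the formula for $v_a(\bfx)$, obtaining
\[
    v_a(\bfx)
    = \sum_{g=1}^m v_a(g) \cdot x_g
    = \sum_{g=1}^m v_a(g) \cdot \frac{|S \cap G(g)|}{T}
    = \frac{1}{T} \sum_{g=1}^m v_a(g) \cdot |S \cap G(g)|,
\]
and the rightmost expression is exactly $v_a'(S)$ by definition. As this chain of equalities holds for every agent $a \in [n]$, the lemma follows.

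There is essentially no genuine obstacle here: the argument is a one-line rearrangement that relies only on the linearity of the value of a fractional bundle and on the already-verified additivity of $\bfv'$. The only point worth remarking is a well-definedness check, namely that $\bfx$ is indeed a legitimate fractional bundle; this holds because $|S \cap G(g)| \leq |G(g)| = T$ forces $x_g \in [0, 1]$, so no separate verification is needed beyond the direct computation above.
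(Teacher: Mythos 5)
Your proposal is correct and is essentially identical to the paper's proof: both simply substitute $x_g = |S \cap G(g)|/T$ into the additive formulas for $v_a(\bfx)$ and $v_a'(S)$ and observe the resulting expressions coincide. The extra well-definedness remark is harmless but not needed.
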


\begin{proof}
    By the definition of $v_a'$,
    \begin{align*}
        v_a'(S)
        = \frac{1}{T} \sum_{g=1}^m v_a(g) \cdot |S \cap G(g)|
        = \sum_{g=1}^m v_a(g) \cdot x_g
        = v_a(\bfx),
    \end{align*}
    as desired.
\end{proof}
    
Let $\bfx = \ps(\succ)$ be the fractional allocation returned PS on input $\succ$, and let $A = \rr(\succ')$ be the integral allocation returned by $\rr$ on input $\succ'$.
By Theorem~\ref{thm:equi-ps-rr}, $x_{a, g} = |A_a \cap G(g)| / T$ for all $a \in [n]$ and $g \in [m]$, and hence, by \Cref{lem:con-uti-ps-rr}, $v_a'(A_a) = v_a(x_a)$ for every agent $a \in [n]$.

Assume that agents in $C \subseteq [n]$ with $|C| \leq c$ form a coalition and misreport $\succ_C$ as $\widetilde{\succ}_C$ such that $v_a(\ps_a(\succ)) \leq v_a(\ps_a(\widetilde{\succ}_C, \succ_{-C}))$ for every corrupted agent $a \in C$, as required by the definition of SGIR.
For every $a \in C$, define a strict ordering $\widetilde{\succ}_a'$ over $G$ as in Line~\ref{code:ps-rr-order} of Mechanism~\ref{alg:probabilistic-serial-by-round-robin} with respect to $\widetilde{\succ}_a$.
Let $\widetilde{\bfx} = \ps(\widetilde{\succ}_C, \succ_{-C})$ be the fractional allocation returned by PS on the manipulated profile $(\widetilde{\succ}_C, \succ_{-C})$, and let $\widetilde{A} = \rr(\widetilde{\succ}_C', \succ_{-C}')$ be the integral allocation returned by $\rr$ on the manipulated profile $(\widetilde{\succ}_C', \succ_{-C}')$.
By \Cref{thm:equi-ps-rr}, $\widetilde{x}_{a, g} = |\widetilde{A}_a \cap G(g)| / T$ for all $a \in [n]$ and $g \in [m]$, and hence, by \Cref{lem:con-uti-ps-rr}, $v_a'(\widetilde{A}_a) = v_a(\widetilde{\bfx}_a)$ for every $a \in [n]$.
As a result, for every corrupted agent $a \in C$,
\begin{align*}
    v_a'(A_a)
    = v_a(\bfx_a)
    \leq v_a(\widetilde{\bfx}_a)
    = v_a'(\widetilde{A}_a).
\end{align*}
Therefore, by the definition of $\gics_{\rr}(c)$,
\begin{align*}
    \frac{v_a(\widetilde{\bfx}_a)}{v_a(\bfx_a)}
    = \frac{v_a'(\widetilde{A}_a)}{v_a'(A_a)}
    \leq \gics_{\rr}(c),
\end{align*}
which concludes that $\gics_{\ps}(c) \leq \gics_{\rr}(c)$.
\section{Round-Robin and Probabilistic Serial: Upper Bounds}
\label{sec:ub}

In this section, we first show an upper bound of $c + 1$ for $\gic_{\rr}(c)$.
Then, we establish the same upper bound of $c + 1$ for $\gics_{\ps}(c)$ using the upper bound for $\gic_{\rr}(c)$.

\subsection{Round-Robin}

\begin{theorem}\label{thm:gic-rr-ub}
    For every $c \geq 1$, $\gic_{\rr}(c) \leq c + 1$.
\end{theorem}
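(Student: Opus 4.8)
The plan is to fix an arbitrary coalition $C\subseteq[n]$ with $|C|\le c$ together with a manipulated ordinal profile $\widetilde{\succ}_C$, and to compare the truthful allocation $A=\rr(\succ)$ with the manipulated allocation $\widetilde{A}=\rr(\widetilde{\succ}_C,\succ_{-C})$. Since the conclusion of $\gic$ only requires a single corrupted agent to have bounded gain, the goal is to exhibit one $a\in C$ with $v_a(\widetilde A_a)\le (c+1)\,v_a(A_a)$. I would first reduce this to a purely combinatorial ``high-good counting'' statement: for a threshold $\tau\ge 0$ let $e_a(\tau)=|\{g\in A_a:v_a(g)>\tau\}|$ and $f_a(\tau)=|\{g\in\widetilde A_a:v_a(g)>\tau\}|$. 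An elementary sorted-domination argument shows that if $f_a(\tau)\le e_a(\tau)+c$ holds for every $\tau$, then, writing $m_1\ge m_2\ge\cdots$ and $u_1\ge u_2\ge\cdots$ for the $v_a$-values of $\widetilde A_a$ and $A_a$ in decreasing order, one gets $m_{i+c}\le u_i$ for all $i$, and hence
\[
v_a(\widetilde A_a)=\sum_i m_i\le \sum_{j=1}^{c} m_j+\sum_i u_i\le c\cdot v_a(A_a)+v_a(A_a)=(c+1)\,v_a(A_a),
\]
using $m_j\le m_1\le u_1\le v_a(A_a)$. So it suffices to find one corrupted agent for which the manipulated run never collects more than $c$ extra high goods at any threshold.

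Second, I would analyze the high/low dichotomy induced by a fixed threshold $\tau$ (with high set $H=\{g:v_a(g)>\tau\}$), exploiting that in $\rr$ each agent prefers every good in $H$ to every good outside $H$. In the truthful run this forces $a$ to collect high goods greedily in an initial block of rounds, which yields a clean lower bound on $e_a(\tau)$ in terms of $|H|$, $n$, and $a$'s position; the honest agents, being truthful, likewise keep consuming high goods whenever their favorite available good is high, which limits how many high goods can remain for anyone. The single-manipulator case ($c=1$) is the warm-up: the only way a lone manipulator increases its high-good count is the ``cascade'' whereby it preempts another agent's favorite high good, downgrading that agent to a low good and freeing exactly one high good net, giving $f_a(\tau)\le e_a(\tau)+1$ and recovering the incentive ratio $2$.

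Third, and this is where the coalition structure enters, I would bound the aggregate boost that a size-$c$ coalition can produce. Conservation of the fixed number $|H|$ of high goods gives $f_a(\tau)-e_a(\tau)=O_V-O_M$, where $O_V,O_M$ count the high goods received by all agents other than $a$ in the two runs; the task is to show that the honest agents' consumption of $H$ cannot be suppressed by more than $c$ for a well-chosen $a\in C$. The main obstacle is precisely the selection of this agent: because the SGIR of $\rr$ is unbounded for $c\ge 2$ (\Cref{thm:gic-rr}), the inequality $f_a(\tau)\le e_a(\tau)+c$ simply \emph{cannot} hold for every corrupted agent---some agent may be helped arbitrarily much while the coalition keeps the others weakly better off. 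Thus the argument must single out one corrupted agent (I would try the earliest-picking member of $C$, or an extremal choice obtained by a pigeonhole over $C$) and carefully account for the cross-interactions, showing that each of the other $c-1$ manipulators can divert at most one additional high good toward the chosen agent through its own cascade, so that the total diversion toward that agent is at most $c$ at every threshold. Controlling these simultaneous cascades across all rounds and all thresholds, while ensuring the chosen agent is genuinely immune to unbounded help, is the technical heart of the proof.
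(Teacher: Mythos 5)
Your overall architecture is sound and matches the paper's in two important respects: reducing to threshold (binary) valuations is exactly the paper's \Cref{lem:bi-ins}, and singling out the earliest-picking member of $C$ is exactly how the paper handles the fact that $\gics_{\rr}(c)$ is unbounded. However, the combinatorial target you reduce to is false. You aim to show, for some $a\in C$ and every threshold $\tau$, the \emph{additive} bound $f_a(\tau)\le e_a(\tau)+c$ (``at most $c$ extra high goods at any threshold''). Already for $c=1$ this would imply $v_a(\widetilde A_a)/v_a(A_a)\le 1+1/e_a\to 1$ as $e_a$ grows, contradicting the known tight incentive ratio of $2$ for RR: the tight instances are (by \Cref{lem:bi-ins}) binary, and there the lone manipulator achieves $f_a\approx 2e_a$ with $e_a$ arbitrarily large, so $f_a-e_a$ is unbounded. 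The correct statement, which the paper proves in \Cref{lem:ic1-rr-group}, is \emph{multiplicative}: $f_a(\tau)\le (c+1)\,e_a(\tau)$ for the earliest-picking corrupted agent. The paper gets this by tracking a potential $X_i$ (goods possibly reachable by agent $1$ under the manipulated run) that can grow by at most $2$ in each of agent $1$'s own stages during his first $e_a$ rounds, by at most $1$ in each other corrupted agent's stages in those rounds, and not at all in honest agents' stages or after round $e_a$, giving $(2+(c-1))e_a=(c+1)e_a$. The multiplicative per-threshold bound then yields the theorem directly by a layer-cake/argmax argument, with no need for your sorted-domination step.

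Two further problems compound this. First, even granting your hypothesis, the displayed derivation uses $m_1\le u_1$, which does not follow from $f_a(\tau)\le e_a(\tau)+c$ (take $A_a$ a single unit-value good and $\widetilde A_a$ consisting of $c$ goods of huge value: the threshold condition holds but the conclusion fails); $m_1\le u_1$ is true for the earliest-picking corrupted agent because the available set at his first stage is unchanged, but that must be argued from the mechanism, not from the counting hypothesis. Second, the part you identify as ``the technical heart'' --- controlling the simultaneous cascades --- is exactly the part you do not supply, and the per-manipulator accounting you sketch there (``each of the other $c-1$ manipulators can divert at most one additional high good'') is again an additive total of $c$, which cannot be right for the same reason as above. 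To repair the proof you would need to replace the additive budget with a per-round budget: in each of the chosen agent's first $e_a$ productive rounds, he can gain at most one extra high good from his own deviation and at most one from each other coalition member's deviation.
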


To prove \Cref{thm:gic-rr-ub}, we show in the following lemma that when agent $1$, who acts as the first agent to select in each round, is one of the corrupted agents, his utility cannot increase by a factor strictly larger than $c + 1$.

\begin{lemma}\label{lem:ic1-rr-group}
    For all $c \geq 1$ and set of corrupted agents $C \subseteq [n]$ with $|C| \leq c$ and $1 \in C$, for all manipulated ordinal preferences $\succ_C'$, we have $v_1(\rr_1(\succ_C', \succ_{-C})) \leq (c + 1) \cdot v_1(\rr_1(\succ))$.
\end{lemma}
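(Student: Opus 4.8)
The plan is to prove the inequality by a layer-cake (horizontal) decomposition of agent $1$'s utility according to his \emph{true} values, reducing the statement to a purely combinatorial per-threshold count. Write $A_1 = \rr_1(\succ)$ and $\widetilde{A}_1 = \rr_1(\succ_C', \succ_{-C})$. For a threshold $\tau \ge 0$ let $a(\tau) := |\{g \in A_1 : v_1(g) > \tau\}|$ and $b(\tau) := |\{g \in \widetilde{A}_1 : v_1(g) > \tau\}|$. Since $v_1(A_1) = \int_0^\infty a(\tau)\,d\tau$ and $v_1(\widetilde{A}_1) = \int_0^\infty b(\tau)\,d\tau$ for nonnegative additive values, it suffices to establish the pointwise bound $b(\tau) \le (c+1)\,a(\tau)$ for every $\tau$; integrating then yields the lemma. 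The whole difficulty is thus concentrated in this one inequality.

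First I would record the truthful-side structure, which is where ``agent $1$ picks first'' is used. In the truthful run agent $1$ always takes his favorite remaining good, so the values of his picks are non-increasing across rounds; hence his picks of value $> \tau$ occupy exactly the first $a(\tau)$ rounds. Moreover, writing $\mathcal{H} := \{g \in [m] : v_1(g) > \tau\}$ and $H := |\mathcal{H}|$, I claim all of $\mathcal{H}$ is consumed within the first $a(\tau)$ rounds of the truthful run: when agent $1$ picks in round $a(\tau)+1$ he (being first, so nothing has been removed earlier in that round) obtains a good of value $\le \tau$, which is only possible if no good of $\mathcal{H}$ remains. Thus agent $1$ truthfully takes $a(\tau)$ goods of $\mathcal{H}$ and the other $n-1$ agents take the remaining $H - a(\tau)$, all inside rounds $1,\dots,a(\tau)$.

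The core step is then a conservation-plus-charging argument. Because every good is ultimately allocated, in \emph{both} runs $\mathcal{H}$ is fully distributed, so agent $1$'s manipulated count satisfies
\[
    b(\tau) - a(\tau)
    = \bigl(\text{$\mathcal{H}$-goods taken by agents $\neq 1$ truthfully}\bigr)
      - \bigl(\text{$\mathcal{H}$-goods taken by agents $\neq 1$ in the manipulated run}\bigr),
\]
i.e.\ agent $1$'s surplus high-value haul equals exactly the \emph{net decrease} in the high-value consumption of the remaining $n-1$ agents. The plan is to bound this net decrease by $c\cdot a(\tau)$. Intuitively, the only way the coalition can make a non-coalition agent consume fewer goods of $\mathcal{H}$ (thereby freeing them for agent $1$) is to have a coalition member seize, within the first $a(\tau)$ high-value rounds, the good that this agent would otherwise have removed from $\mathcal{H}$; each such ``redirection'' must be charged to a distinct pick made by one of the $c$ coalition members during those $a(\tau)$ rounds, of which there are at most $c\cdot a(\tau)$. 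Equivalently, one shows that $\mathcal{H}$ can survive at most $(c+1)$ times as many rounds in the manipulated run as in the truthful run, since in the extremal case agent $1$ grabs one good of $\mathcal{H}$ per round until it is exhausted.

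The main obstacle is making this charging rigorous: the executions of $\rr(\succ)$ and $\rr(\succ_C',\succ_{-C})$ are not nested, and a coalition member's early pick can trigger a cascade of changes in which honest agents who are only \emph{partial} high-value lovers (valuing some goods of $\mathcal{H}$ above junk but others below) are diverted to junk in one run but not the other. The delicate point is that ``agent $1$ first'' insulates him \emph{within} a round but the coalition reshapes availability \emph{across} rounds, so one must set up a coupling of the two runs and an injective assignment from each lost high-value consumption of a non-coalition agent to a coalition (agent, round) pair with round index $\le a(\tau)$, while verifying that no coalition pick is charged twice. Once this injection is in hand the bound $b(\tau)-a(\tau)\le c\cdot a(\tau)$ follows, and the layer-cake integration completes the proof; I expect the construction and verification of this injection to be the technically demanding part, whereas the reduction and the truthful-side facts are routine.
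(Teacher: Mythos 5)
Your reduction to a per-threshold count is sound and is essentially the paper's own first move: bounding $b(\tau)\le (c+1)\,a(\tau)$ for the indicator valuation $\mathbf{1}\{v_1(\cdot)>\tau\}$ is exactly the reduction to binary valuations that the paper carries out via its Lemma~\ref{lem:bi-ins}, and your truthful-side observations (agent $1$'s picks are non-increasing in value, and all of $\mathcal{H}$ is gone after his $a(\tau)$-th round because he picks first) are correct and are used in the paper as well. The conservation identity $b(\tau)-a(\tau)=(\text{$\mathcal{H}$-goods others take truthfully})-(\text{$\mathcal{H}$-goods others take after manipulation})$ is also fine.

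However, the proof has a genuine gap at precisely the point you flag yourself: the injection from each ``lost'' high-value consumption of a non-coalition agent to a distinct coalition pick in the first $a(\tau)$ rounds is never constructed, and the obstacles you list (the two executions are not nested; a single early coalition pick can cascade through honest agents' choices) are exactly the reasons a naive round-by-round matching fails. The paper resolves this not by an explicit injection but by a potential function: with $\widehat{G}$ the first $nU$ truthfully-picked goods (where $U=v_1(\rr_1(\succ))$), it tracks $X_i=(B_i'\cup(G_i'\setminus G_i))\cap\widehat{G}$, i.e.\ agent $1$'s manipulated bundle together with the goods that the truthful run would already have removed but the manipulated run has ``deferred.'' The key case analysis shows $|X_i|$ increases by at most $2$ on agent $1$'s turns and at most $1$ on other coalition members' turns during the first $U$ rounds, and \emph{never} increases on an honest agent's turn --- because an honest agent who deviates from his truthful pick $g_i$ must be consuming a good $g_i'$ that was already counted in $G_{i-1}'\setminus G_{i-1}$, so the deferred set gains $g_i$ but loses $g_i'$. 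This swap argument is the device that absorbs the cascades you worry about; without it (or an equivalent coupling), your charging step remains a plausible plan rather than a proof, so the argument as written is incomplete.
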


The proof of \Cref{lem:ic1-rr-group} generalizes and simplifies the proof of \cite[Theorem 5.1]{tao2024fairtruthfulmechanismsadditive}, which is presented as follows.

\begin{proof}[Proof of \Cref{lem:ic1-rr-group}]
It is known that the maximum incentive ratio of an ordinal mechanism is attained when agents' valuation functions are \emph{binary}, i.e., $v_a(g) \in \{0, 1\}$ for each $g \in [m]$~\cite{DBLP:journals/jcss/HuangWWZ24}.
We show that this observation also holds for SGIR and GIR.

\begin{lemma}\label{lem:bi-ins}
    Given an ordinal mechanism $\calM$, fix a valuation profile $\bfv = (v_1, \ldots, v_n)$, a subset of agents $C \subseteq [n]$, and their manipulated ordinal preferences $\succ_C'$.
    Let $\succ$ be an ordering profile consistent with $\bfv$.
    For every agent $a \in C$, there exists a binary valuation function $\widehat{v}_a$ consistent with $\succ_a$ such that
    \begin{align*}
        \frac{\widehat{v}_a(\calM_a(\succ_C', \succ_{-C}))}{\widehat{v}_a(\calM_a(\succ))}
        \geq \frac{v_a(\calM_a(\succ_C', \succ_{-C}))}{v_a(\calM_a(\succ))}.
    \end{align*}
\end{lemma}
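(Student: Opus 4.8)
The plan is to exploit that $\calM$ is ordinal: the bundles $A := \calM_a(\succ)$ and $B := \calM_a(\succ_C', \succ_{-C})$ depend only on the ordinal profiles, not on the cardinal values, so they remain fixed no matter which cardinal valuation consistent with $\succ_a$ we assign to agent $a$. It therefore suffices to find a single binary $\widehat{v}_a$ consistent with $\succ_a$ satisfying $\widehat{v}_a(B) / \widehat{v}_a(A) \geq v_a(B) / v_a(A)$, where $A$ and $B$ are now treated as fixed bundles (sets in the indivisible case, fractional vectors in the divisible case).

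First I would characterize the binary valuations consistent with $\succ_a$. Writing $g_1 \succ_a \cdots \succ_a g_m$ and $P_t := \{g_1, \ldots, g_t\}$, a valuation with values in $\{0,1\}$ is consistent with $\succ_a$ precisely when its value-$1$ goods form a prefix $P_t$; denote the corresponding threshold valuation by $\widehat{v}_a^{(t)}$, so that $\widehat{v}_a^{(t)}(S)$ counts, or for divisible goods sums the fractions of, the goods of $S$ lying in $P_t$. The candidate valuations are thus indexed by $t \in \{1, \ldots, m\}$.

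The key step is to decompose $v_a$ into these thresholds. Setting $v_a(g_{m+1}) := 0$ and $\lambda_t := v_a(g_t) - v_a(g_{t+1}) \geq 0$, the monotonicity of $v_a$ along $\succ_a$ gives $v_a(S) = \sum_{t=1}^m \lambda_t \cdot \widehat{v}_a^{(t)}(S)$ for every bundle $S$, uniformly in both the indivisible and divisible settings. Applying this to $A$ and $B$ and invoking the \emph{mediant inequality}—for nonnegative numerators and positive denominators, the ratio of the sums is at most the largest of the individual ratios—yields an index $t^\ast$ with $\widehat{v}_a^{(t^\ast)}(A) > 0$ and $\widehat{v}_a^{(t^\ast)}(B) / \widehat{v}_a^{(t^\ast)}(A) \geq \big(\sum_t \lambda_t \widehat{v}_a^{(t)}(B)\big) / \big(\sum_t \lambda_t \widehat{v}_a^{(t)}(A)\big) = v_a(B) / v_a(A)$, after which we take $\widehat{v}_a := \widehat{v}_a^{(t^\ast)}$.

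I expect the main obstacle to be the careful treatment of degenerate thresholds in the mediant step. One must restrict to indices $t$ with $\lambda_t > 0$, and separately dispose of thresholds with $\widehat{v}_a^{(t)}(A) = 0$: if such a threshold also has $\widehat{v}_a^{(t)}(B) > 0$ then the ratio is infinite and the claim is immediate, while otherwise that term contributes nothing to either sum and can be discarded. Since $v_a(A) > 0$, which is required for the target ratio to be well-defined, at least one active term with positive denominator survives, so the mediant inequality applies to the restricted sum and the argument closes.
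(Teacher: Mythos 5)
Your proposal is correct and follows essentially the same route as the paper: your decomposition $v_a(S) = \sum_t \lambda_t \, \widehat{v}_a^{(t)}(S)$ with $\lambda_t = v_a(g_t) - v_a(g_{t+1})$ is precisely the Abel-summation identity the paper uses, and your mediant-inequality step selecting the threshold $t^\ast$ is exactly the paper's choice of $g'$ maximizing the prefix-sum ratio $r_{\max}$. Your explicit treatment of degenerate thresholds is, if anything, slightly more careful than the paper's $0/0 = 1$ convention.
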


    The proof of \Cref{lem:bi-ins} is deferred to \Cref{sec:proof-bi-ins}.
    Since RR is an ordinal mechanism, by \Cref{lem:bi-ins}, it suffices to focus on binary valuation functions.
    
    For convenience, we use $g_1, \ldots, g_m$ to denote the goods.
    We renumber the goods so that for every $i \in [m]$, $g_i$ is the good received by some agent in stage $i$ under the true profile $\succ$, i.e., $g_i$ is the favorite good among all the remaining goods for the agent who is designated to receive a good in stage $i$.
    For every $i \in [0: m]$, define $G_i = \{g_{i + 1}, \ldots, g_m\}$ as the set of remaining goods at the end of stage $i$, and define $B_i$ as the set of goods received by agent $1$ until the end of stage $i$ under the true profile.
    Assume that the corrupted agents misreport their ordinal preferences as $\succ_C'$.
    Let $g_i'$ be the good received by some agent in stage $i$ under the manipulated profile $(\succ_C', \succ_{-C})$, and define $G_i' = \{g_{i + 1}', \ldots, g_m'\}$ and $B_i'$ analogously for $i \in [0: m]$.
    Note that $B_i' \cap G_i' = \emptyset$ for every $i \in [0: m]$.

    Denote $U = v_1(\rr_1(\succ))$ as the utility of agent $1$ under the true profile, which is also equal to the number of goods in $\rr_1(\succ)$ with value $1$ for agent $1$ since $v_1$ is binary.
    Observe that $v_1(g_i) = 0$ for all $i > nU$ since $v_1(g_{nU + 1}) = 0$ and $g_{nU + 1}$ is agent $1$'s favorite good among $g_{nU + 1}, \ldots, g_m$.
    Hence, define $\widehat{G} = \{g_1, \ldots, g_{nU}\}$, and we have $v_1(\rr_1(\succ_C', \succ_{-C})) \leq |\rr_1(\succ_C', \succ_{-C}) \cap \widehat{G}|$.
    Consequently, it suffices to show that
    \begin{align}\label{eqn:sz-rr1}
        |\rr_1(\succ_C', \succ_{-C}) \cap \widehat{G}|
        \leq (c + 1) \cdot v_1(\rr_1(\succ))
        = (c + 1) U.
    \end{align}
    
    We now describe how to keep track of the bundle received by agent $1$ under the manipulated profile.
    Intuitively, for each stage $i \in [m]$, $G_i' \setminus G_i$ contains all the goods in $\{g_1, \ldots, g_i\}$ ``left'' to the future stages by the corrupted agents via misreporting, and in the extreme case, they will all end up being received by the corrupted agents.
    For $i \in [0: m]$, let $X_i = (B_i' \cup (G_i' \setminus G_i)) \cap \widehat{G}$ be the set of goods in $(B_i' \cup \{g_1, \ldots, g_i\}) \cap \widehat{G}$ possibly received by agent $1$.
    We will show that $|X_m| \leq (c + 1) U$, which implies \eqref{eqn:sz-rr1} since $G_m' = G_m = \emptyset$ and $B_m' = \rr_1(\succ_C', \succ_{-C})$.

    We analyze how $|X_i|$ evolves in each stage.
    For stage $i \in [nU]$,
    \begin{itemize}
        \item \textbf{Case 1: stage $i$ is agent $1$'s turn.}
        $|X_i| - |X_{i - 1}| \leq 2$ since $|B_i' \setminus B_{i - 1}'| \leq 1$ and $|(G_i' \setminus G_i) \setminus (G_{i - 1}' \setminus G_{i - 1})| \leq 1$.

        \item \textbf{Case 2: stage $i$ is a corrupted agent $a \in C \setminus \{1\}$'s turn.}
        $|X_i| - |X_{i - 1}| \leq 1$ since $B_i' = B_{i - 1}'$ and $|(G_i' \setminus G_i) \setminus (G_{i - 1}' \setminus G_{i - 1})| \leq 1$.

        \item \textbf{Case 3: stage $i$ is an honest agent $a \in [n] \setminus C$'s turn.}
        We show $|X_i| \leq |X_{i - 1}|$ in this case.
        Note that $B_i' = B_{i - 1}'$, and hence $X_i \setminus X_{i - 1}$ is equal to either $\{g_i\}$ or $\emptyset$.
        If $g_i \notin G_i'$, then $X_i \subseteq X_{i - 1}$, which implies $|X_i| \leq |X_{i - 1}|$.
        On the other hand, if $g_i \in G_i'$, then we must have $g_i' \succ_a g_i$ since $g_i'$ is agent $a$'s favorite good in $G_{i - 1}' = G_i' \cup \{g_i'\}$.
        Moreover, since agent $a$ is honest, $g_i' \notin G_{i - 1}$ as otherwise, agent $a$ would have received $g_i'$ instead of $g_i$ in stage $i$ under the honest profile.
        As a result, $X_i = (X_{i - 1} \setminus \{g_i'\}) \cup \{g_i\}$, and hence $|X_i| = |X_{i - 1}|$.
    \end{itemize}
    For stage $i \in [nU+1: m]$, we claim that $|X_i| \leq |X_{i - 1}|$.
    To see this, for each $g \in \widehat{G}$, we show that $g \notin X_{i - 1}$ implies $g \notin X_i$.
    Note that $g \notin G_{i - 1}$ as $G_{i - 1} \cap \widehat{G} = \emptyset$.
    As a result, if $g \notin X_{i - 1}$, then $g \notin B_{i - 1}' \cup G_{i - 1}'$.
    Since agents can only receive goods in $G_{i - 1}'$ in stage $i$ under the manipulated profile, $G_i \subseteq G_{i - 1}$, and $G_i' \subseteq G_{i - 1}'$, it holds that $g \notin X_i$.
    This gives $|X_i| \leq |X_{i - 1}|$.
    
    To summarize the above arguments, $|X_i|$ increases by at most $2$ in agent $1$'s stages during the first $U$ rounds, increases by at most $1$ in each corrupted agent $a \in C \setminus \{1\}$'s stages during the first $U$ rounds, and does not increase otherwise.
    Hence,
    \begin{align*}
        |X_m|
        \leq 2U + (c - 1)U
        = (c + 1) U,
    \end{align*}
    which concludes the proof.    
\end{proof}

Now, we are ready to finish the proof of \Cref{thm:gic-rr-ub} by applying \Cref{lem:ic1-rr-group}.

\begin{proof}[Proof of \Cref{thm:gic-rr-ub}]
Note that if agent $1$ is not one of the corrupted agents $C$, then the corrupted agents' misreporting cannot alter the outcomes of RR in the first $\min_{a \in C} a - 1$ stages.
Hence, we can hypothetically remove the goods acquired by some agents in the first $\min_{a \in C} a - 1$ stages and pretend that RR starts from agent $\min_{a \in C} a$ in each round.
Now, the corrupted agent $\min_{a \in C} a$ plays the role of agent $1$, for whom \Cref{lem:ic1-rr-group} then applies.
\end{proof}

\subsection{Probabilistic Serial}

In this subsection, we upper bound the SGIR of PS.
Note that combining Theorems~\ref{thm:gics-ps-grr} and~\ref{thm:gic-rr-ub} only gives $\gic_{\ps}(c) \leq c + 1$.
Nevertheless, we manage to show that $\gics_{\ps}(c)$ is also upper bounded by $c + 1$ via leveraging the proof ingredients of both Theorems~\ref{thm:gics-ps-grr} and~\ref{thm:gic-rr-ub} more carefully.

\begin{theorem}\label{thm:gics-ps-ub}
    For every $c \geq 1$, $\gic_{\ps}(c) \leq \gics_{\ps}(c) \leq c + 1$.
\end{theorem}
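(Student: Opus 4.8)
The plan is to prove $\gics_{\ps}(c) \leq c+1$ directly, since we already have $\gics_{\ps}(c) \geq \gic_{\ps}(c)$ from \Cref{cor:compare-defs-gic}. The difficulty is that combining \Cref{thm:gics-ps-grr} with \Cref{thm:gic-rr-ub} only yields $\gics_{\ps}(c) \leq \gics_{\rr}(c)$, and we expect $\gics_{\rr}(c) = +\infty$ for $c \geq 2$ (per the results table), so the RR bound on SGIR is useless. What we \emph{can} reuse is the machinery: the reduction from PS to RR (\Cref{thm:equi-ps-rr}), the utility-preserving correspondence of \Cref{lem:con-uti-ps-rr}, and crucially the \emph{internal} argument of \Cref{lem:ic1-rr-group}, which tracks a single distinguished agent (agent $1$, who moves first each round) and proves a $(c+1)$ bound for \emph{that} agent regardless of the others. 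The point is that \Cref{lem:ic1-rr-group} gives a \emph{per-agent} bound for the first-moving corrupted agent, not a "min over the coalition" bound, so it is strong enough to feed an SGIR argument if we can arrange for the agent whose gain we want to bound to be first.

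First I would set up the same reduction as in \Cref{sec:proof-gics-ps-grr}: given a valuation profile $\bfv$ with consistent ordering $\succ$, a coalition $C$ with $|C| \leq c$, and a manipulation $\widetilde{\succ}_C$ under which every corrupted agent is weakly better off, translate to the discretized instance on $G$ via Mechanism~\ref{alg:probabilistic-serial-by-round-robin}, obtaining $A = \rr(\succ')$ and $\widetilde{A} = \rr(\widetilde{\succ}_C', \succ_{-C}')$ with $v_a'(A_a) = v_a(\bfx_a)$ and $v_a'(\widetilde{A}_a) = v_a(\widetilde{\bfx}_a)$ for every $a$, exactly as established there. Now fix the particular corrupted agent $a^\star \in C$ whose incentive ratio $v_{a^\star}(\widetilde{\bfx}_{a^\star}) / v_{a^\star}(\bfx_{a^\star})$ I wish to bound; to prove SGIR I must bound this for \emph{every} $a^\star \in C$. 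The key maneuver is that the index labeling in RR is arbitrary, so I would relabel agents within each round so that $a^\star$ becomes the \emph{first} mover. The subtlety is that RR's round-robin order is fixed across all rounds, so relabeling must be done consistently; I would invoke the argument from the proof of \Cref{thm:gic-rr-ub} that lets RR "start" from any chosen agent.

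The main obstacle I anticipate is the following mismatch: \Cref{lem:ic1-rr-group} assumes the distinguished agent is \emph{literally} first in the global order and that the coalition contains agent $1$. To port the bound to $a^\star$, I must re-examine the $X_i$-tracking argument and verify it still yields $|X_m| \leq (c+1)U$ when $a^\star$ is the first mover among all agents \emph{and} honest agents may precede corrupted ones in later parts of the round. The cleanest route is to prove a strengthened version of \Cref{lem:ic1-rr-group}: for any corrupted agent who is first among \emph{all} agents in the round-robin order, the $(c+1)$ bound holds; this follows by the same case analysis (Case 1 gives $+2$ only in $a^\star$'s own $U$ stages, each other corrupted agent contributes $+1$ per round over $U$ rounds, and honest agents never increase $|X_i|$), giving $2U + (c-1)U = (c+1)U$. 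Because this bound holds for each corrupted agent individually when placed first, and the relabeling is always available, I conclude $v_{a^\star}(\widetilde{\bfx}_{a^\star}) \leq (c+1)\, v_{a^\star}(\bfx_{a^\star})$ for \emph{every} $a^\star \in C$, which is precisely the SGIR guarantee $\gics_{\ps}(c) \leq c+1$. The delicate verification is confirming that placing $a^\star$ first does not disturb the reduction's correctness (\Cref{thm:equi-ps-rr} is stated for the fixed index order), so I would check that the equivalence survives an arbitrary but fixed round-robin order, or equivalently argue purely on the RR side after relabeling and only then translate utilities back via \Cref{lem:con-uti-ps-rr}.
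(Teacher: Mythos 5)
Your proposal is correct and follows essentially the same route as the paper: the paper also proves the stronger per-agent statement (without even requiring the coalition to be weakly better off), invokes the anonymity of PS to assume WLOG that the corrupted agent of interest is agent $1$, runs the reduction of \Cref{thm:equi-ps-rr} and \Cref{lem:con-uti-ps-rr}, and then applies \Cref{lem:ic1-rr-group} directly — no strengthened version is needed, since that lemma already bounds the first mover's gain whenever he is corrupted, regardless of the other agents. The relabeling subtlety you flag is handled exactly as you suggest, by relabeling on the PS side before discretizing.
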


\begin{proof}
    Due to \Cref{cor:compare-defs-gic}, it suffices to show that $\gics_{\ps}(c) \leq c + 1$.
    Fix $c \geq 1$.
    We will prove a stronger statement that no corrupted agent can improve his utility by a factor strictly larger than $c + 1$, even without the requirement that all corrupted agents must be weakly better off.
    Let $C \subseteq [n]$ with $|C| \leq c$ be the set of corrupted agents.
    Assume without loss of generality that $1 \in C$, and it suffices to show that agent $1$ cannot improve his utility by a factor strictly larger than $c + 1$.

    Given a valuation profile $\bfv$ with a consistent ordering profile $\succ$, we construct an ordering profile $\succ'$ over a set of goods $G$ defined as in Lines~\ref{code:ps-rr-init} and~\ref{code:ps-rr-order} of Mechanism~\ref{alg:probabilistic-serial-by-round-robin}.
    Moreover, we construct a valuation profile $\bfv'$ on $G$ such that for all agent $a \in [n]$ and subset of goods $S \subseteq G$,
    \begin{align*}
        v_a'(S)
        = \frac{1}{T} \sum_{g=1}^m v_a(g) \cdot |S \cap G(g)|,
    \end{align*}
    where $T$ and $G(g)$ are defined as in Line~\ref{code:ps-rr-init} of Mechanism~\ref{alg:probabilistic-serial-by-round-robin}.
    It is easy to verify that $\bfv'$ is an additive valuation profile.
    Let $\bfx = \ps(\succ)$ be the fractional allocation returned by PS on input $\succ$, and let $A = \rr(\succ')$ be the integral allocation returned by RR on input $\succ'$.
    By Theorem~\ref{thm:equi-ps-rr}, $x_{a, g} = |A_a \cap G(g)| / T$ for all $a \in [n]$ and $g \in [m]$.
    Hence, by \Cref{lem:con-uti-ps-rr}, $v_a'(A_a) = v_a(x_a)$ for every agent $a \in [n]$.
    
    Assume that the corrupted agents misreport $\succ_C$ as $\widetilde{\succ}_C$.
    For every $a \in C$, define a strict ordering $\widetilde{\succ}_a'$ over $G$ as in Line~\ref{code:ps-rr-order} of Mechanism~\ref{alg:probabilistic-serial-by-round-robin}.
    Let $\widetilde{\bfx} = \ps(\widetilde{\succ}_C, \succ_{-C})$ be the fractional allocation returned by PS on input $(\widetilde{\succ}_C, \succ_{-C})$, and let $\widetilde{A} = \rr(\widetilde{\succ}_C', \succ_{-C}')$ be the integral allocation returned by RR on input $(\widetilde{\succ}_C', \succ_{-C}')$.
    By \Cref{thm:equi-ps-rr}, $\widetilde{x}_{a, g} = |\widetilde{A}_a \cap G(g)| / T$ for all $a \in [n]$ and $g \in [m]$, and hence by \Cref{lem:con-uti-ps-rr}, $v_a'(\widetilde{A}_a) = v_a(\widetilde{x}_a)$ for every agent $a \in [n]$.
    Therefore, by \Cref{lem:ic1-rr-group},
    \begin{align*}
        \frac{v_1(\widetilde{x}_1)}{v_1(x_1)}
        = \frac{v_1'(\widetilde{A}_1)}{v_1'(A_1)}
        \leq c + 1,
    \end{align*}
    concluding the proof.
\end{proof}
\section{Round-Robin and Probabilistic Serial: Lower Bounds}
\label{sec:lb}

In this section, we prove lower bounds for the SGIRs and GIRs of PS and RR.
We first show that $\gic_{\ps}(c)$ is lower bounded by $c + 1$, which indicates that the upper bounds in \Cref{thm:gics-ps-ub} are tight.
Combining \Cref{thm:gics-ps-grr}, it follows that $\gic_{\rr}(c) \geq c + 1$, implying the upper bound in \Cref{thm:gic-rr-ub} is also tight.
Moreover, we show that $\gics_{\rr}(c)$ is unbounded for $c \geq 2$.

\subsection{Probabilistic Serial}

\begin{theorem}\label{thm:gic-ps-lb}
    For every $c \geq 1$, $\gics_{\ps}(c) \geq \gic_{\ps}(c) \geq c + 1$.
\end{theorem}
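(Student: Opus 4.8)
The first inequality $\gics_{\ps}(c) \geq \gic_{\ps}(c)$ is exactly \Cref{cor:compare-defs-gic}, so the whole content lies in proving $\gic_{\ps}(c) \geq c + 1$. Unwinding \Cref{def:gir}, it suffices to exhibit, for every $\varepsilon > 0$, a valuation profile, a coalition $C$ with $|C| \leq c$, and a joint misreport $\widetilde{\succ}_C$ under which \emph{every} corrupted agent strictly improves by a factor larger than $c + 1 - \varepsilon$; since this rules out any $R < c + 1 - \varepsilon$ and holds for all $\varepsilon$, we obtain $\gic_{\ps}(c) \geq c + 1$. The crucial (and restrictive) feature, in contrast to an SGIR lower bound, is that the gain must be realized \emph{simultaneously} by all $c$ corrupted agents, since GIR measures the \emph{minimum} gain across the coalition. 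The plan is therefore to design a fully \emph{symmetric} instance, invariant (up to relabeling of corrupted agents and of the goods they target) under permutations of $C$, so that the single joint misreport benefits each member by the same factor.

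The gain mechanism I would exploit is the one intrinsic to PS: under the eating process every agent who keeps eating until termination consumes the same total mass, so a corrupted agent's value is governed entirely by \emph{which} goods fill that mass and by \emph{when} he starts eating a contested good relative to his competitors. Concretely, I would build the instance from, for each corrupted agent $a \in C$, a ``prize'' good that $a$ values highly, together with a large block of honest agents whose top choices are ``decoy'' goods that the corrupted agents do not value; a common filler pool is added to fix the length of the eating horizon. Truthfully, the corrupted agents (who have aligned preferences on the prizes) compete both with one another and with the honest agents, arriving at the prizes only after the prizes have largely been consumed, so each receives only a $\approx 1/(c+1)$ share of the prize value. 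Under the symmetric misreport, each corrupted agent reorders his reported preference so that (i) he no longer competes with the other corrupted agents for the same prize, and (ii) he starts eating his own prize inside the time window during which all honest competitors are still occupied with their decoys; this lets each corrupted agent capture a near-full unit of prize value. Letting the number of honest agents tend to infinity drives the truthful share to $1/(c+1)$ of the manipulated value, giving a ratio approaching $c + 1$ for every member of $C$ at once.

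The main obstacle is verifying that the PS eating schedule behaves exactly as intended on both the truthful and the manipulated profiles: I must pin down the finishing times of the prizes, the decoys, and the filler, and check that the ``window'' during which corrupted agents eat prizes unshared really persists in the limit, while confirming that symmetry across $C$ is preserved so the \emph{minimum} gain (not merely the maximum) tends to $c+1$. Rather than tracking the continuous dynamics directly, I would carry out this verification through the exact correspondence of \Cref{thm:equi-ps-rr}: the chosen PS instance maps to a concrete RR instance on the copies $G(g)$, where the eating schedule becomes a discrete sequence of picks that is far easier to audit round by round, and the induced utilities transfer back to PS via \Cref{lem:con-uti-ps-rr}. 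Once the construction is validated, the lower bound for GIR is complete, and—combined with \Cref{thm:gics-ps-grr}—it immediately yields $\gic_{\rr}(c) \geq c + 1$, matching the upper bound of \Cref{thm:gic-rr-ub}.
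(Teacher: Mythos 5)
Your opening reduction is fine: to show $\gic_{\ps}(c)\ge c+1$ it suffices to exhibit, for every $\varepsilon>0$, an instance and a joint misreport under which \emph{every} coalition member gains a factor $c+1-\varepsilon$, and verifying the eating schedule through \Cref{thm:equi-ps-rr} and \Cref{lem:con-uti-ps-rr} is a legitimate way to audit the dynamics. The gap is in the construction itself. You insist on a \emph{symmetric} instance in which all corrupted agents have equal truthful utility and realize their gain by racing the honest agents to their respective ``prizes'' while the honest agents are occupied with decoys. That gain mechanism is exactly the single-agent incentive-ratio-$2$ trick run in parallel, and it provably cannot be pushed past a factor of about $2$. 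Concretely, pass to binary valuations via \Cref{lem:bi-ins} (which can be done preserving your symmetry) and couple with RR via \Cref{thm:equi-ps-rr}. If every corrupted agent has truthful utility $U$, then every good any of them values lies among the goods picked in roughly the first $U$ rounds of the truthful RR run. During those rounds the coalition itself makes only about $cU$ picks, and the stage-by-stage accounting in the proof of \Cref{lem:ic1-rr-group} shows that at most one additional good per corrupted-agent stage --- hence about $cU$ goods in total --- can be ``left behind'' for later collection, while honest agents' stages do not enlarge the left-behind set. The coalition's total manipulated utility is therefore at most about $2cU$, so the \emph{average} ratio over the coalition is at most about $2$, and the minimum ratio --- which is what GIR measures --- is at most about $2$. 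No symmetric instance of the kind you describe can certify the bound $c+1$; also, your asserted truthful share of ``$\approx 1/(c+1)$ of the prize'' is never justified and, by the above, cannot be arranged consistently with a near-full manipulated capture.

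The paper's construction is deliberately asymmetric, and that asymmetry is the essential idea your proposal is missing. Corrupted agent $a$ has truthful utility $T^a$ for a large parameter $T$, and the dominant part of his gain (about $cT^a$ out of his final $(c+1)T^a$) comes not from beating the honest agents but from goods that \emph{truthfully go to the other corrupted agents}: each agent $a$ is assigned a reserved block $G_a$ that the other coalition members agree, via their misreports, to skip. The geometric scaling makes this consistent: agent $a'$ cedes to the agents $a<a'$ only about $\sum_{a<a'}T^a=O(T^{a'-1})$ worth of goods he values, negligible against his own utility $T^{a'}$, while what he receives is of order $cT^{a'}$; the ``windows'' $[0,T^1]\subset\cdots\subset[0,T^c]$ during which the respective agents are still accumulating value are nested and geometrically growing, so each agent's left-behind budget of order $cT^a$ is not contested by the agents of smaller index. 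Only the remaining $\approx T^a$ of the gain comes from the honest agents' goods via the reordering trick you describe. Without this asymmetric intra-coalition redistribution, the construction you outline tops out at a ratio of $2$ rather than $c+1$.
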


\begin{proof}
    Due to \Cref{cor:compare-defs-gic}, it suffices to show that $\gic_{\ps}(c) \geq c + 1$.
    Fix $c \geq 1$.
    Assume that there are $n$ agents and $m = (c + 1)nT^c$ divisible goods for sufficiently large $n > c$ and $T > 1$, where $T$ is a multiple of $n$.
    Let $C = [c]$ be the set of corrupted agents, and we label the goods by $g_a^{(p)}$ for $a \in [n]$ and $p \in [(c + 1) T^c]$.
    Let $G := \{g_a^{(p)} \mid a \in [n], p \in [(c + 1)T^c]\}$ denote the set of all goods.
    We first specify the valuation functions.
    For all $a \in [c]$ and $i \in [n]$,
    \begin{align*}
        v_a(g_i^{(p)}) =
        \begin{cases}
            1, & p \leq T^a,\\
            0, & p > T^a.
        \end{cases}
    \end{align*}
    For every $a \in [c + 1: n]$,
    \begin{align*}
        v_a(g_i^{(p)}) =
        \begin{cases}
            1, & i > c \text{ or } p > T^c,\\
            0, & i \leq c \text{ and } p \leq T^c.
        \end{cases}
    \end{align*}
    Next, we present each agent $a$'s ordinal preference $\succ_a$ consistent with $v_a$, and it is sufficient to describe how agent $a$ breaks ties among goods with the same value.
    On a high level, agent $a$ favors goods with a smaller superscript, subject to which he prioritizes goods with subscript $a$ and otherwise favors goods with a smaller subscript.
    More formally, for all agent $a \in [n]$ and goods $g_i^{(p)}, g_j^{(q)}$ with $g_i^{(p)} \neq g_j^{(q)}$ and $v_a(g_i^{(p)}) = v_a(g_j^{(q)})$, we have $g_i^{(p)} \succ_a g_j^{(q)}$ whenever at least one of the following conditions is met:
    \begin{enumerate}
        \item $p < q$;
        \item $p = q$ and $i = a$;
        \item $p = q$, $j \neq a$, and $i < j$.
    \end{enumerate}
    It is easy to verify that for all $a, i \in [n]$ and $p \in [(c + 1) T^c]$,
    \begin{align*}
        \ps_{a,g_i^{(p)}}(\succ)
        = \begin{cases}
            1, & i = a,\\
            0, & \text{otherwise}.
        \end{cases}
    \end{align*}
    In other words, for every agent $a \in [n]$, every good with subscript $a$ is completely received by agent $a$.
    Hence, $v_a(\ps_a(\succ)) = T^a$ for every corrupted agent $a \in [c]$.

    Before we specify the misreported ordinal preferences of the corrupted agents, we partition the goods into multiple sets.
    Let
    \begin{align*}
        G_0 = \left\{ g_i^{(p)} \; \Big \lvert \; i > c \text{ and } p \leq T^c \right\}, \qquad
        \widetilde{G} = \left\{g_i^{(p)} \; \Big \lvert \; p > M^c\right\},
    \end{align*}
    and for every $a \in [c]$, let
    \begin{align*}
        G_a = \left\{g_i^{(p)} \; \Big \lvert \; i \leq c \text{ and } T + T^2 + \ldots + T^{a - 1} < p \leq T^a\right\}.
    \end{align*}
    Note that $\widetilde{G} \cup G_0 \cup G_1 \cup \ldots \cup G_c = G$.
    Intuitively, since the goods in $G_0$ are valuable for both the corrupted agents and the honest agents, the corrupted agents will first compete for the goods in $G_0$; after goods in $G_0$ are all exhausted, each corrupted agent $a \in [c]$ will collect all the goods in $G_a$.
    Now, we specify the ordinal preferences $\succ_C'$ misreported by the corrupted agents.
    For all subsets of goods $G', G''$ with $G' \cap G'' = \emptyset$, we use $G' \succ G''$ to denote that $g' \succ g''$ for all $g' \in G'$ and $g'' \in G''$.
    For every corrupted agent $a \in [c]$, $\succ_a'$ satisfies
    \begin{align*}
        G_0 \succ_a' G_a \succ_a' \widetilde{G} \succ_a' G \setminus (G_0 \cup G_a \cup \widetilde{G}).
    \end{align*}
    In particular, among goods in $G_a$, agent $a$ breaks ties in favor of goods with a smaller superscript, subject to which he favors goods with a smaller subscript.
    Agent $a$'s tie-breaking rule among other goods can be arbitrary.

    It remains to analyze $\ps(\succ_C', \succ_{-C})$.
    At the beginning of the execution of $\ps(\succ_C', \succ_{-C})$, since the goods in $G_0$ have the top priority for all agents, they will exhaust the goods in $G_0$ in the first
    \begin{align*}
        \frac{|G_0|}{n}
        = \frac{(n - c)T^{c}}{n}
    \end{align*}
    units of time.
    Moreover, since all agents break ties among $G_0$ in favor of goods with a smaller superscript, we have
    \begin{align*}
        \sum_{i=c + 1}^n \ps_{a, g_i^{(p)}}(\succ_C', \succ_{-C}) = \frac{n - c}{n}
    \end{align*}
    for all $a \in [n]$ and $p \in [T^c]$.
    After all the goods in $G_0$ are exhausted, every honest agent starts receiving the goods in $\widetilde{G}$, and every corrupted agent $a \in [c]$ starts receiving the goods in $G_a$.
    To see that every corrupted agent $a \in [c]$ has enough time to receive all the goods in $G_a$, note that every agent will receive $(c + 1) T^c$ units of goods in total, and
    \begin{align*}
        \frac{|G_0|}{n} + |G_a|
        = \frac{(n - c)T^c}{n} + cT^a - \sum_{k=1}^{a - 1}cT^k
        < (c + 1) T^c.
    \end{align*}
    Hence, every corrupted agent $a \in [c]$ will first exhaust all the goods in $G_a$ and then keep receiving goods in $\widetilde{G}$ until the mechanism terminates.
    By the above analysis, for every corrupted agent $a \in [c]$,
    \begin{align*}
        v_a(\ps_a(\succ'_C, \succ_{-C}))&
        = \sum_{i = 1}^c \sum_{p = 1}^{T^a} \ps_{a, g_i^{(p)}}(\succ_C', \succ_{-C}) + \sum_{i = c + 1}^n \sum_{p = 1}^{T^a} \ps_{a, g_i^{(p)}}(\succ_C', \succ_{-C})\\
        &= |G_a| + \frac{|G_0|}{n}\\
        &= cT^a - \sum_{k=1}^{a - 1}cT^k + \frac{(n - c) T^a}{n}\\
        &= cT^a - c \cdot \frac{T^a - T}{T - 1} + \frac{(n - c)T^a}{n}.
    \end{align*}
    Therefore,
    \begin{align*}
        \gic_{\ps}(c)
        \geq \min_{a \in [c]} \frac{v_a(\ps_a(\succ'_C, \succ_{-C}))}{v_a(\ps_a(\succ))}
        = \min_{a \in [c]} \left( c - c \left( \frac{1}{T - 1} - \frac{1}{T^{a - 1} (T - 1)} \right) + \frac{n - c}{n} \right),
    \end{align*}
    which approaches $c + 1$ as $n \to \infty$ and $T \to \infty$.    
\end{proof}

The following is a direct corollary of Theorems~\ref{thm:gics-ps-grr} and~\ref{thm:gic-ps-lb}.

\begin{corollary}\label{cor:lb-gic-grr-eve}
    For every $c \geq 1$, $\gic_{\rr}(c) \geq c + 1$.
\end{corollary}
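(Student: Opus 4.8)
The plan is to chain the two inequalities that have already been established, since the corollary is precisely their composition. Theorem~\ref{thm:gic-ps-lb} directly supplies the lower bound $\gic_{\ps}(c) \geq c+1$ for the GIR of PS, witnessed by the explicit hard instance constructed in its proof. Theorem~\ref{thm:gics-ps-grr} supplies the comparison $\gic_{\ps}(c) \leq \gic_{\rr}(c)$, asserting that the GIR of PS never exceeds that of RR. I would simply invoke these two facts and combine them by transitivity to obtain
\[
\gic_{\rr}(c) \geq \gic_{\ps}(c) \geq c+1,
\]
which is exactly the claim.

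Conceptually, the direction of the inequality in Theorem~\ref{thm:gics-ps-grr} is what makes this work: the PS-to-RR reduction formalized in Theorem~\ref{thm:equi-ps-rr} shows that any collusive manipulation in PS can be matched by a corresponding manipulation in RR on the discretized instance, so that \emph{lower} bounds on the manipulability of PS propagate \emph{upward} to RR (while upper bounds propagate the other way, as used in \Cref{thm:gics-ps-ub}). Thus the hard instance need not be re-engineered for RR; it suffices to transport the PS lower bound through the reduction.

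There is no genuine obstacle at this step, as the corollary is a one-line consequence of transitivity. All the technical substance lives in the two cited results: the explicit $(c+1)$-gap instance of Theorem~\ref{thm:gic-ps-lb}, and the equivalence of Mechanisms~\ref{alg:probabilistic-serial} and~\ref{alg:probabilistic-serial-by-round-robin} from Theorem~\ref{thm:equi-ps-rr} that underpins the comparison in Theorem~\ref{thm:gics-ps-grr}. Consequently, my proposed proof is to state these two inputs and conclude by transitivity.
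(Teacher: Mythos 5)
Your proposal is correct and matches the paper exactly: the corollary is stated there as a direct consequence of Theorems~\ref{thm:gics-ps-grr} and~\ref{thm:gic-ps-lb}, combined by transitivity just as you write. No further comment is needed.
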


\subsection{Round-Robin}

Next, we show that $\gics_{\rr}(c)$ is unbounded for $c \geq 2$.
Note that for the case of $c = 1$, we have $\gics_{\rr}(1) = \ic_{\rr} = 2$.

\begin{theorem}\label{thm:gic-rr}
    For every $c \geq 2$, $\gics_{\rr}(c)$ is unbounded.
\end{theorem}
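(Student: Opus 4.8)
The plan is to construct an explicit instance with a coalition of $c \geq 2$ corrupted agents, where one corrupted agent's utility under truthful reporting is driven to zero (or an arbitrarily small value) while remaining strictly positive, so that any bounded multiplicative gain is impossible. Since SGIR measures the \emph{maximum} incentive ratio among corrupted agents subject to the constraint that \emph{every} corrupted agent is weakly better off, I only need to exhibit a manipulation in which all $c$ corrupted agents are weakly better off, and at least one of them improves from a tiny truthful utility to a constant utility, making the ratio blow up. The key asymmetry to exploit is that RR lets an earlier-indexed agent grab a good that a later-indexed corrupted agent values highly; a colluding partner who reports early can ``hand off'' a good to the victim by declining it, an option unavailable to a single manipulator.

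The concrete construction I would use is as follows. Take two corrupted agents, say agents $1$ and $2$ (with $c-2$ additional dummy corrupted agents reporting truthfully if $c > 2$), plus enough honest agents. Design agent $2$'s truthful valuation so that under the honest profile, agent $2$ is consistently beaten to every good he values by earlier-picking honest agents, yielding $v_2(\rr_2(\succ))$ equal to some small positive $\epsilon$ (for instance, $v_2$ places weight $\epsilon$ on a single good he is destined to receive and larger weight on goods he never gets because honest agents with smaller index take them first). Then have agent $1$ misreport his preferences so that in the relevant round he declines the good that agent $2$ covets, allowing agent $2$ to seize it in the manipulated run and obtain utility bounded below by a constant independent of $\epsilon$. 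I must simultaneously verify that agent $1$ is weakly better off (or at least not worse off) under the manipulation, as required by the SGIR definition; this is arranged by giving agent $1$ an alternative good of equal or greater value to compensate for the one he relinquishes.

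The main steps in order: first, fix the number of agents and goods and write down explicit binary or near-binary valuations, invoking \Cref{lem:bi-ins} if helpful to reduce to binary valuations, though here I may want a single small-weight good to realize the vanishing denominator. Second, trace the truthful run of RR to confirm $v_2(\rr_2(\succ)) = \epsilon$ and record $v_1(\rr_1(\succ))$. Third, specify $\succ_1'$ (and $\succ_2'$ if needed) and trace the manipulated run, confirming that agent $2$ now receives a good worth a constant $\delta > 0$ and that agent $1$'s manipulated utility is at least his truthful utility. Fourth, compute the ratio $v_2(\rr_2(\succ_C', \succ_{-C})) / v_2(\rr_2(\succ)) \geq \delta / \epsilon$ and let $\epsilon \to 0$ to conclude unboundedness.

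The hard part will be ensuring the two constraints hold \emph{simultaneously}: the manipulation must keep agent $1$ weakly better off (so that the instance genuinely witnesses a large \emph{SGIR} and not merely a large ratio under an arbitrary deviation), while driving agent $2$'s truthful utility to zero without accidentally letting agent $2$ pick up a valuable good in the honest run through some later round of RR. Carefully controlling the later rounds of the Round-Robin execution—making sure no residual good of positive value to agent $2$ survives to a stage where agent $2$ picks—is the delicate bookkeeping, since RR runs for $m/n$ rounds and a naive construction might leak value to the victim in a late round. I expect to resolve this by padding with enough zero-valued goods so that every positively-valued good of agent $2$ is exhausted by earlier agents well before agent $2$'s turn in the truthful run.
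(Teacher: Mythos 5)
Your proposal is correct and takes essentially the same approach as the paper: a two-agent coalition in which the earlier-picking corrupted agent relinquishes the good the later corrupted agent values enormously (being compensated by other goods so he is weakly better off), making the victim's ratio blow up; your ``denominator $\to 0$'' framing is just a rescaling of the paper's ``numerator $\to \infty$'' instance with $v_2(g_1)=1/\epsilon$. The only imprecision is that the coveted good must be held truthfully by the \emph{corrupted} partner (not by an honest earlier-indexed agent, as one sentence of your construction suggests), and the paper realizes everything with just $3$ agents and $4$ goods, so no padding or late-round bookkeeping is needed.
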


\begin{proof}
    It suffices to show that $\gics_{\rr}(2)$ is unbounded.
    Consider an instance with $n = 3$ agents and $m = 4$ goods.
    Let $\epsilon > 0$ be a sufficiently small real number.
    The valuation functions are given by
\begin{table}[H]
\centering
\begin{tabular}{@{}lllll@{}}
\toprule
      & $g_1$            & $g_2$          & $g_3$       & $g_4$       \\ \midrule
$v_1$ & $1 + 2 \epsilon$ & $1 + \epsilon$ & $1$         & 0 \\
$v_2$ & $1 / \epsilon$   & $1$            & $0$ & 0 \\
$v_3$ & $0$              & $0$            & $2$         & $1$         \\ \bottomrule
\end{tabular}
\end{table}
    \noindent Note that $\rr(\succ) = (\{g_1, g_4\}, \{g_2\}, \{g_3\})$, and thus $v_1(\rr_1(\succ)) = 1 + 2\epsilon$, and $v_2(\rr_2(\succ)) = 1$.

    Now, suppose that agents $1$ and $2$ form a coalition and misreport their ordinal preferences as $\succ_1'$ and $\succ_2'$ satisfying $\succ_2' = \succ_2$ and
    \begin{align*}
        g_3 \succ_1' g_2 \succ_1' g_1 \succ_1' g_4.
    \end{align*}
    It is easy to verify that $\rr(\succ_1', \succ_2', \succ_3) = (\{g_2, g_3\}, \{g_1\}, \{g_4\})$.
    As a result,
    \begin{align*} 
        v_1(\rr_1(\succ_1', \succ_2', \succ_3))
        = 2 + \epsilon
        > v_1(\rr_1(\succ)),
    \end{align*}
    and
    \begin{align*}
        \frac{v_2(\rr_2(\succ_1', \succ_2', \succ_3))}{v_2(\rr_2(\succ))}
        = \frac{1}{\epsilon}.
    \end{align*}
    Since $\epsilon$ can be arbitrarily small, we conclude that $\gics_{\rr}(2)$ is unbounded.
\end{proof}
\section{Discussion and Future Directions}

In this paper, we generalize the incentive ratio framework to capture collusive manipulation and tightly characterize the SGIRs and GIRs of MNW, PS, and RR.
Given that RR remains the only known fair mechanism for indivisible goods that admits a bounded incentive ratio, it is tempting to conjecture that any reasonably fair mechanism for indivisible goods admits an unbounded SGIR and a GIR that grows with the coalition size.
However, we expect this problem to be challenging as all current negative results for indivisible goods only hold for two agents~\cite{DBLP:conf/sigecom/AmanatidisBCM17,tao2024fairtruthfulmechanismsadditive}.
In particular, the resolution of the above conjecture would imply that there does not exist a fair and truthful mechanism when there are more than two agents, which still remains an open problem.

\section*{Acknowledgments}
The research of Biaoshuai Tao is supported by the National Natural Science Foundation of China (No. 62472271).

\bibliographystyle{alpha}
\bibliography{references}

\newpage
\appendix
\section{Proof of \Cref{lem:bi-ins}}
\label{sec:proof-bi-ins}

    Fix an agent $a \in C$.
    Assume without loss of generality that the ordinal preference of $a$ satisfies $1 \succ_a \cdots \succ_a m$.
    For every $g \in [m]$, let $\ell_g := \calM_{a, g}(\succ)$ and $\ell_g' := \calM_{a, g}(\succ_C', \succ_{-C})$ be the fractions of $g$ received by agent $a$ under the true and manipulated profiles, respectively.
    Hence,
    \begin{align*}
        \frac{v_a(\calM_a(\succ_C', \succ_{-C}))}{v_a(\calM_a(\succ))}
        = \frac{\ell_1' v_a(1) + \ldots + \ell_m' v_a(m)}{\ell_1 v_a(1) + \ldots + \ell_m v_a(m)}.
    \end{align*}
    Denote
    \begin{align*}
        g' \in \argmax_{g \in [0: m]} \frac{\ell_1' + \ldots + \ell_g'}{\ell_1 + \ldots + \ell_g}
        \qquad \text{and} \qquad
        r_{\max} := \frac{\ell_1' + \ldots + \ell_{g'}'}{\ell_1 + \ldots + \ell_{g'}},
    \end{align*}
    where we adopt the convention $0 / 0 = 1$.
    Note that
    \begin{align*}
        \frac{\ell_1' v_a(1) + \ldots + \ell_m' v_a(m)}{\ell_1 v_a(1) + \ldots + \ell_m v_a(m)}
        = \frac{\sum_{g=1}^m (v_a(g) - v_a(g + 1)) \sum_{i=1}^g \ell_i'}{\sum_{g=1}^m (v_a(g) - v_a(g + 1)) \sum_{i=1}^g \ell_i},
    \end{align*}
    where we denote $v_a(m + 1) = 0$.
    It follows that
    \begin{align*}
        r_{\max}
        \geq \frac{\ell_1' v_a(1) + \ldots + \ell_m' v_a(m)}{\ell_1 v_a(1) + \ldots + \ell_m v_a(m)}
        = \frac{v_a(\calM_a(\succ'_C, \succ_{-C}))}{v_a(\calM_a(\succ))}.
    \end{align*}

    Now, we define the desired binary valuation function $\widehat{v}_a$.
    Let $\widehat{v}_a(g) = 1$ for $g = 1, \ldots, g'$ and $\widehat{v}_a(g) = 0$ for $g = g' + 1, \ldots, m$.
    It is easy to verify that $\widehat{v}_a$ is consistent with $\succ_a$.
    As a result,
    \begin{align*}
        \frac{\widehat{v}_a(\calM_a(\succ_C', \succ_{-C}))}{\widehat{v}_a(\calM_a(\succ))}
        &= \frac{\ell_1' \widehat{v}_a(1) + \ldots + \ell_m' \widehat{v}_a(m)}{\ell_1 \widehat{v}_a(1) + \ldots + \ell_m \widehat{v}_a(m)}
        = \frac{\sum_{g=1}^m (\widehat{v}_a(g) - \widehat{v}_a(g + 1)) \sum_{i = 1}^g \ell_i'}{\sum_{g=1}^m (\widehat{v}_a(g) - \widehat{v}_i(g + 1)) \sum_{i=1}^g \ell_i}\\
        &= \frac{\sum_{i=1}^{g'} \ell_i'}{\sum_{i=1}^{g'} \ell_i}
        = r_{\max}
        \geq \frac{v_a(\calM_a(\succ'_C, \succ_{-C}))}{v_a(\calM_a(\succ))},
    \end{align*}
    which concludes the proof.

\end{document}